\newcommand\blfootnote[1]{%
	\begingroup
	\renewcommand\thefootnote{}\footnote{#1}%
	\addtocounter{footnote}{-1}%
	\endgroup
}
\newcommand{\lp}{\left(}
\newcommand{\rp}{\right)}
\def\dx{\,\mathrm{d}x}
\def\p{\,\partial}
\let\mc=\mathcal
\def\R{\mathbb{R}}
\def\1{\mathds{1}}
\let\eps\varepsilon
\numberwithin{equation}{section}
\newtheorem{thm}{Theorem}[section]
\theoremstyle{definition}
\newtheorem{remark}[thm]{Remark}
\author{Viktoria Freingruber$^{\ast \,\dagger}$
	\and Rebeca Gonzalez-Cabaleiro\footnote{Department of Biotechnology, Faculty of Applied Sciences, Delft University of Technology, Van der Maasweg 9, 2629 HZ Delft, The Netherlands. V.E.Freingruber@tudelft.nl \&  R.GonzalezCabaleiro@tudelft.nl}
	\and Havva Yolda\c{s}\footnote{Delft Institute of Applied Mathematics, Faculty of Electrical Engineering, Mathematics and Computer Science, Delft University of Technology, Mekelweg 4, 2628CD Delft, The Netherlands.  H.Yoldas@tudelft.nl}
}
\title{Ecological interactions and spatial dynamics in microbial aggregates: A novel modelling framework}
\begin{document}
	
	\maketitle
	
	\vspace{-10pt}

	\begin{abstract}
		
		\noindent
		We present a mathematical model based on a system of partial differential equations (PDEs) with cross-diffusion and reaction terms to describe ecological interactions between multiple bacterial species and substrates within microaggregates, where bacteria proliferate in response to substrate availability and undergo passive dispersal driven by population pressure gradients. The ecological interactions include interspecific competition for shared substrates, and commensalism, whereby one species benefits from the metabolic by-products of another.
        The main motivation comes from individual-based models (IBMs) of microbial aggregates, where simulations reveal that substrate-limited conditions can give rise to rich spatial patterns.
        Our numerical experiments demonstrate that our PDE-based model captures the key qualitative features of three verification scenarios that have previously been investigated with IBMs. Moreover, we formally derive a competition system from an on-lattice biased random walk, and establish local well-posedness for a parameter-symmetric subcase of it. We then formally analyse the travelling wave behaviour of this case in one spatial dimension and compare the minimal travelling wave speed with the wave speed measured in the simulations.
		
		\blfootnote{\emph{Keywords and phrases.} Multi-species microbial communities, Mathematical model, Cross-diffusion system, Competition, Commensalism, Travelling wave} 
		\blfootnote{\emph{2020 Mathematics Subject Classification.} 35Q92, 92D25, 92B99, 35M33, 92C75}
		
	\end{abstract}
	
	\section{Introduction} \label{sec:intro}
	
	Microorganisms represent an incredibly diverse and ubiquitous group of life forms, essential to the functioning of Earth's biochemical cycles \cite{FFD08}. They thrive in a vast range of environments—from soil and water to animal hosts and industrial bioreactors—where they coexist with numerous other microbial species \cite{LL20}. Within these complex ecosystems, bacterial species can engage in various ecological interactions: they may compete for limited resources, co-operate through metabolic exchange, harm one another through predation, or simply coexist without direct interaction \cite{AM89, B94, FR12, PF22}. These interspecies relationships play a crucial role in shaping the structure, biochemical function, and resilience of microbial communities. They can give rise to rich non-linear dynamics, including the emergence of spatial patterns and travelling wave–like expansions of biomass \cite{MASSG22, PT2008, VWKK2014, ZWWT2017, LDHD2014}.

	Microbial communities can spatially organise in a multitude of ways. In natural ecosystems, these range from loosely associated planktonic populations to spatially complex structured aggregates \cite{C20, CGC78, FWSSRK16}. In laboratory or industrial settings, bacterial cultures are often studied in the form of biofilms on surfaces or in suspended cultures such as bioreactors \cite{GGMVL18, V20, SJKH14}. In this article, we focus on microbial aggregates that form in controlled laboratory settings where the substrate supply is continuously regulated. This setting can be a bioreactor, where, even though these are usually well-stirred, bacterial populations can form granules or grow as biofilms on surfaces. These aggregates vary in size and provide localised environments where diffusion limitations, cell-to-cell interactions and spatial positioning strongly influence the local microbial composition.
	
	Mathematical models are a fundamental tool to investigate dynamics of microbial systems. Well-mixed bioreactor communities are traditionally modelled using systems of ordinary differential equations (ODEs), which describe average concentrations of substrates and population densities over time \cite{BW85, DH83, W90}. 
	However, these models fail to capture spatial effects, including diffusion-limited interactions and the persistence of microbial diversity through local niche advantages within microaggregates. In contrast, biofilms are typically studied using continuum models in the form of partial differential equations (PDEs) that describe the time evolution of the space-dependent population density \cite{EPL01, WEMNPRL05, WW98, WW99}. These models usually consider nutrient gradients and mechanical interactions between bacteria \cite{HL14, MFDPPE18, WZ10}. Another way to account for spatial effects in microbial systems is through individual-based models (IBMs) \cite{HCCPK16}. The IBMs have been used extensively for microbial aggregates \cite{MASSG22, MSSG23} and biofilms \cite{K04, KBW98, KPWL01, PKL04}.
	
	To illustrate the advantage of spatial models, consider, for instance, two competing populations: a growth strategist, which relies on a higher effective growth rate, and a yield strategist, which prioritises more efficient use of available substrate. A classical chemostat ODE model would predict the dominance of the growth strategist over the yield strategist \cite{BW85}. However, IBMs have demonstrated that under certain conditions the yield strategist can outcompete the growth strategist \cite{K04}, highlighting the crucial role of spatial structure in microbial systems. A major limitation of IBMs is that they are computationally demanding compared to ODE and PDE systems, as they typically track the position and size of each individual at every time step. Moreover, they are difficult to analyse mathematically since standard tools for studying long-term population dynamics cannot be applied directly. Hence, quantitative predictions of the asymptotic dynamics are rather limited.
	
	In this article, we introduce a general PDE-based model for microbial aggregates subject to various ecological interactions mediated by substrate availability. The resulting PDE system describes the spatio-temporal dynamics of an arbitrary number of interacting bacterial species and substrates and is therefore highly coupled, incorporating both cross-diffusion and reaction terms. Our motivation comes from the IBM presented in \cite{MASSG22}, describing the evolution of bacterial microaggregates through growth, division, and movement rules at the single-cell level coupled with a diffusion-reaction equation for the evolution of the substrate concentration within a controlled environment. Our goal is to address two main limitations of the IBM, namely, achieving analytical tractability and computational efficiency while retaining essential spatial information. Although an earlier work has explored connections between IBM and PDE formulations of biofilms \cite{KHE09}, most PDE models developed for biofilms have traditionally been introduced from a phenomenological perspective. We introduce a simplified, lattice-based version of the IBM in \cite{MASSG22} that retains the essential individual-level behaviours. From this discrete description, we then derive a general PDE system. To the best of our knowledge, the general PDE model is novel in this context. Although it is flexible enough to accommodate a wide range of ecological interactions, we initially focus on competition and commensalism (cf. Figure~\ref{fig:schematic}), which allows for a direct comparison with previous IBM results \cite{K04, MASSG22}. We perform numerical simulations for our PDE models for these cases and find that the qualitative dynamics agree very well with the results in \cite{K04, MASSG22}. In particular,
	\begin{figure}[ht!]
		\centering
		\begin{subfigure}[b]{0.49\textwidth}
			\centering
			\includegraphics[width=\textwidth]{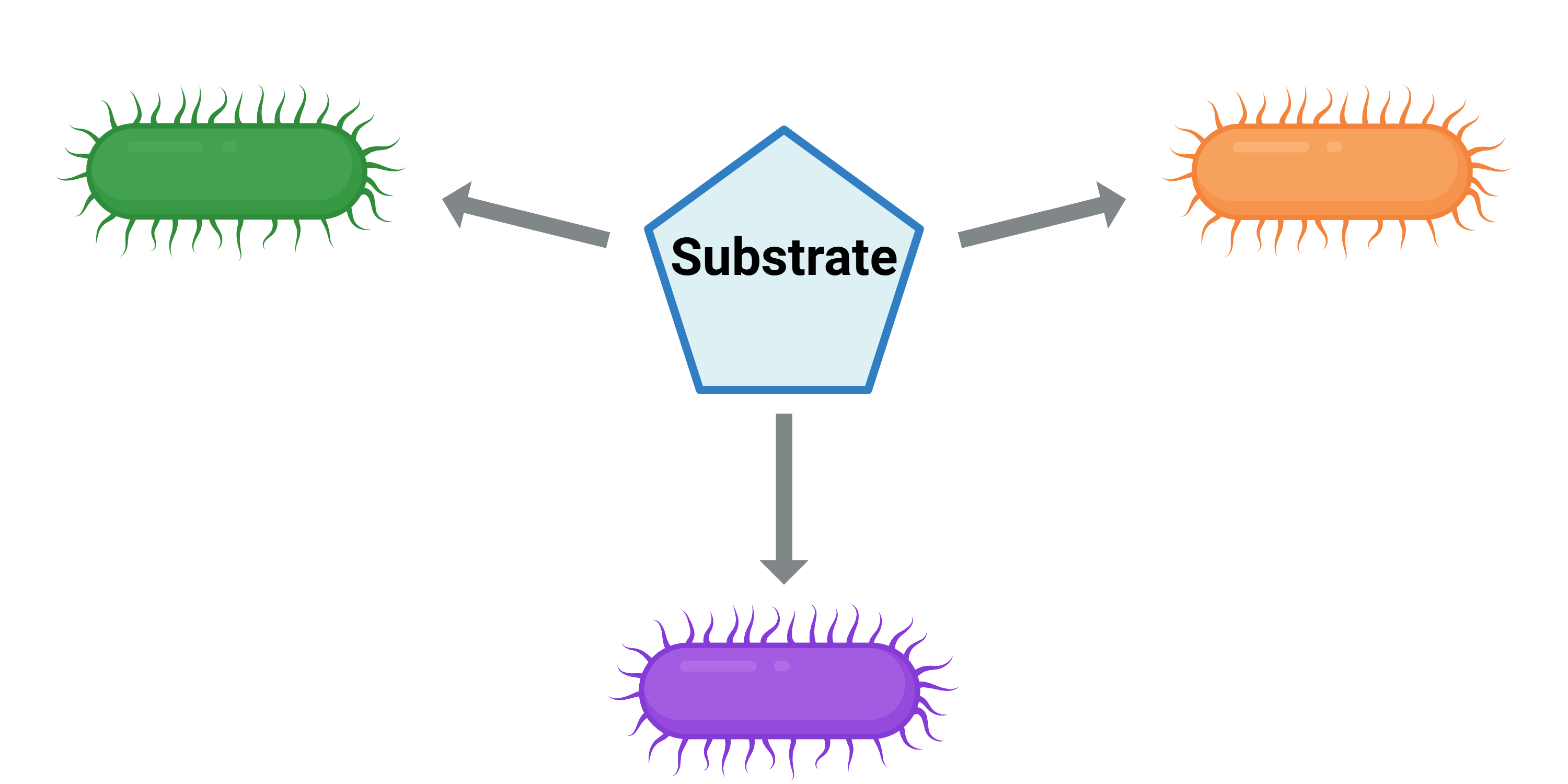}
			\caption{Competition}
			\label{fig:schematic_competition}
		\end{subfigure}	
		\hfill
		\begin{subfigure}[b]{0.49\textwidth}
			\centering
			\includegraphics[width=\textwidth]{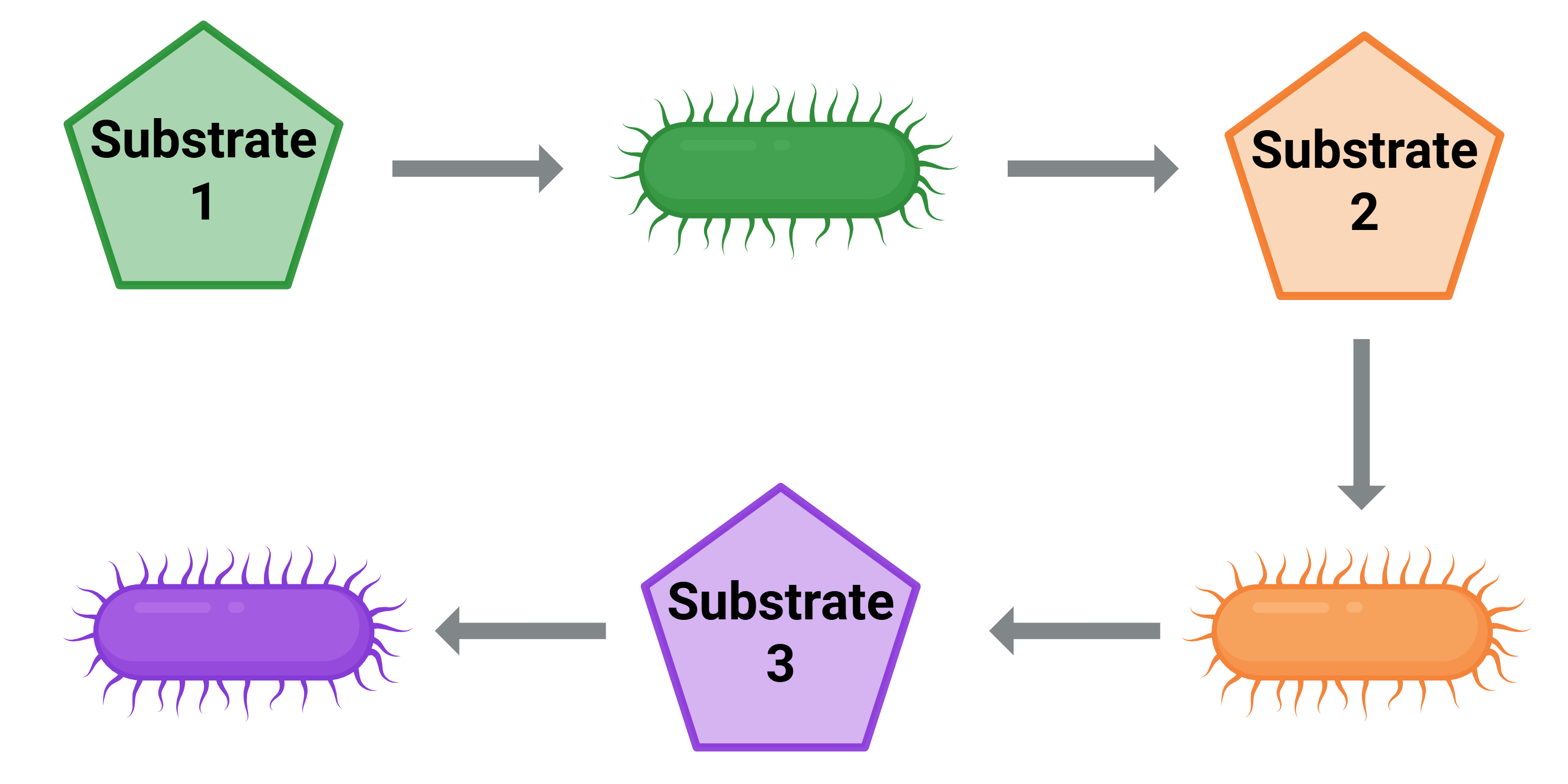}
			\caption{Commensalism}
			\label{fig:schematic_commensalism}
		\end{subfigure}
		\caption{Schematic representations of two interaction types: (a) Competition: The green, orange and purple species directly compete for the consumption of a single substrate; (b) Commensalism: The green species converts substrate 1 into substrate 2, which serves as a resource for the orange species, leading to the production of substrate 3, which is finally consumed by the purple species.}
		\label{fig:schematic}
	\end{figure}
	\begin{itemize}
		\item in the case of competition between three bacterial species for a single substrate (cf. Figure \ref{fig:schematic_competition}), we observe that bacterial species separate into radially growing sectors, \emph{columned stratification}, as has been found in \cite{MASSG22};
		\item in the case of two bacterial species with different growth strategies competing for a single substrate, our results suggest that a \emph{yield strategist} (more efficient use of substrate)  can out-compete a \emph{growth strategist} (higher maximum growth rate), under certain substrate-limited conditions, as in \cite{K04};
		\item in the case of commensalism between three bacterial species (cf. Figure \ref{fig:schematic_commensalism}), we observe \emph{layered stratification}, i.e., different bacterial populations occupy distinct radial layers within the aggregate, as has been reported in \cite{MASSG22}.
	\end{itemize}
	
    We also provide a mathematical analysis of a specific subcase of the competition system, consisting of coupled PDEs for the time evolution of the total biomass and a single substrate densities. We show that this system is well-posed locally in time using Amann's theory for quasilinear parabolic equations, \cite{Am90}. Then, we formally derive a formula for the minimal travelling wave speed governing the expansion of the total biomass, and compare it with the numerically computed wave speed.
	
	\paragraph{Structure of the paper.} In Section~\ref{sec:model}, we list the main components constituting our model, introduce the general PDE system for an arbitrary number of bacterial species and substrates, and define specific cases for subsequent comparison with the IBMs developed previously for these specific cases. In Section~\ref{sec:analysis}, we provide a formal derivation of a pure competition model for $n$ species from an on-lattice biased random walk, a local-in-time well-posedness of a subcase of the latter, and a formal travelling wave analysis of the subcase. We present our numerical results for the selected subcases in Section~\ref{sec:results}, and a final discussion as well as future perspectives in Section~\ref{sec:discussion}. Details on the numerical scheme can be found in Appendix~\ref{app:numericalscheme}.
	
	\section{The PDE framework} \label{sec:model}
	
	In this section, we outline our main modelling assumptions, formulate the general framework for an arbitrary number of bacterial species and substrates, and then introduce three specific cases: pure competition, pure commensalism, and mixed interactions for a system of nitrifying bacteria. Finally, we briefly comment on the difference of our models compared to similar PDE ones.
    \paragraph {Movement of the bacteria.} The movement of bacteria is modelled as a combination of diffusion and advection down a pressure gradient. Diffusion is assumed to be negligible compared to advection, reflecting the dominance of pressure-driven shoving effects in densely packed aggregates. The pressure is assumed to depend linearly on the total mass density of the microbial population, which represents the mechanical crowding and displacement forces observed in the individual-based model \cite{MASSG22}.

    \paragraph{Growth and decay of bacterial populations.} We assume that bacterial growth is limited by the availability of some substrates and follows Monod kinetics. This approach reflects the saturable nature of microbial metabolism in nutrient-limited environments. Competition can be introduced between several populations by the shared use of one substrate. In addition, a density-dependent death term is included to account for the effects of crowding and resource depletion, which can lead to cell mortality as population density increases.
    
    \paragraph{Diffusion of substrates.} Substrate transport is modelled by standard Fickian diffusion. Nutrients may enter the system through the domain boundary, but the framework can be adapted to different boundary conditions, depending on how nutrient supply is implemented in the experimental setting. Mathematically, this is typically represented by Neumann conditions for prescribed fluxes, Dirichlet conditions for fixed concentrations at the boundary, or a combination of these.
    
    \paragraph{Consumption and/or production of substrates.} Certain substrates may function as nutrients for specific microbial populations, with uptake described by Monod kinetics and scaled by the corresponding yield coefficients. In addition, substrates may be generated as metabolic by-products, whereby particular bacterial species transform one substrate into another.

    \smallskip
	Based on these assumptions, we present a general framework for $n$ bacterial species with $m$ substrates in two spatial dimensions. 
	\subsection{A general model}  \label{sec:generalPDE}
	
	Let $u_i(x,t)$ denote the population density of the $i$-th species, for $1 \leq i \leq n $ and $c_j(x,t)$ be the concentration of the $j$-th substrate, for $1\leq j \leq m$, where $x \in \Omega \subset \R^2, \, t >0$. A similar derivation to that in Section~\ref{sec:derivation} leads to the following system of PDEs, for $1\leq i \leq n $, $1\leq j \leq m$,
	\begin{align}
		\label{eq:generalPDEmodel}
		\begin{split}
			\frac{\partial }{\partial t} u_i &= \nabla \cdot \left[ d_i \nabla u_i + a_i u_i \nabla f(\rho) \right] + u_i\left( r_{i} \prod_{j=1}^m  \frac{c_j}{K_{ij} + c_j } - b_i \rho  \right), 
			\\
			\frac{\partial }{\partial t} c_j &= D_j \Delta c_j - \sum_{i=1}^n  u_i \delta_{ij} \frac{r_{i}}{Y_{ij}} \prod_{k=1}^m \frac{c_k} {K_{ik} + c_k} + \sum_{i=1}^n \sigma_{ij} u_i r_i  \prod_{k=1, k \neq j}^m \frac{c_k} {K_{ik} + c_k},
		\end{split}
	\end{align} where $\rho (x,t) := \sum_i^n u_i(x,t)$ is the total bacterial density. The diffusion coefficient for the $j$-th substrate and the diffusion and advection coefficients for the $i$-th species are referred to as  $D_j>0$, $d_i > 0$, and $a_i >0$, respectively. The function $f$ represents the internal pressure and is assumed to be a smooth function of the density $\rho$, monotonically increasing for $\rho >0$ and $f(0) = 0$. Although one could consider various density-dependent pressure terms, see, e.g., \cite{CLM20, CLM25}, for simplicity, we will assume $f(\rho):=\rho$ for the remainder of this article. If crowding-induced displacement is the primary driver of spatial displacement, a natural assumption is $d_i \ll a_i$ for $1 \leq i \leq n$. Moreover, we denote the death rates by $b_i>0$, the maximal growth rates by $r_i>0$ with half-saturation concentrations $K_{ij}$ and yield coefficients $Y_{ij}$ in the Monod term. For the coefficients involved in the Monod term for nutrient uptake, we make a case distinction: 
	\begin{itemize}
		\item If the $j$-th substrate is consumed as a nutrient by the $i$-th bacterial species, then $\delta_{ij} = 1$ and we assume that both the half-saturation concentration $K_{ij}>0$ and the yield coefficient $Y_{ij}>0$ are positive constants.
		\item Otherwise, we have $\delta_{ij}=0$, $K_{ij}=0$ and $Y_{ij}$ does not need to be defined.
	\end{itemize}
	Note that in the latter case, the presence of substrate $c_j$ does not limit the growth of population $u_i$ and the artificial loss of substrate $c_j$ is avoided by setting $\delta_{ij}=0$. If the $j$-th substrate is a metabolic product of species $i$, then the conversion factor $\sigma_{ij}>0$. Otherwise, we set $\sigma_{ij}=0$ to avoid the artificial creation of substrate. Note also that, for biological realism, a single species should not both consume and produce the same substrate; that is, for fixed $i,j$ at most one of $\delta_{ij}$ or $\sigma_{ij}$ may be positive.  
	
	System \eqref{eq:generalPDEmodel} is equipped with initial and Neumann-type boundary conditions,
	\begin{align} \label{eq:generalPDE_IC_BC}
		\begin{split}
			u_i (x,0) &= u^0_i (x)  \, \, \text{ for } x \in \Omega, \qquad \qquad \frac{\partial}{\partial \nu} u_i = 0  \quad \text{  on } \partial \Omega,  \quad 1\leq i \leq n, \\
			c_j(x,0)   &= c_j^0 \quad \, \, \,\, \, \text{ for } x \in \Omega, \qquad   \quad \quad \frac{\partial}{\partial \nu} c_j  = c_j^\infty \, \,  \text{  on } \partial \Omega, \quad 1 \leq j \leq m,
		\end{split}
	\end{align}
	where $\nu$ is the outer normal vector on $\partial \Omega$, $c_j^\infty$, $c_j^0$ are non-negative constants for $1\leq j\leq m$, and $u^0_i (x)$ are non-negative functions for $1 \leq i \leq n$.

	\paragraph{Comparison of our system \eqref{eq:generalPDEmodel} with similar models in the literature.} The flux term for the bacterial movement in \eqref{eq:generalPDEmodel} contains a crowding- or pressure-driven cross-diffusion mechanism of the form $u_i\nabla f(\rho)$, which appears in various continuum frameworks for microbial or cellular populations.  Classical biofilm models often use density-dependent diffusion (sometimes degenerate), e.g. $\nabla\!\cdot(D(u)\nabla u)$, to capture biomass spreading and volume filling, as in \cite{KD02, SEE15, HS22}. More general multispecies cross-diffusion formulations have been proposed for mixed biofilms, extending these ideas to interacting populations \cite{AK07, RSRW15}. In tumour models, cell velocity is frequently assumed to follow Darcy’s law, i.e., to be proportional to the pressure gradient, leading to fluxes of the form $u\nabla p(\rho)$, where the pressure $p(\rho)$ is a function of the cell density, coupled to reaction kinetics, as seen in \cite{CLM20,NS16}. Compared to these PDE models, our model couples a pressure-driven movement law to multi-substrate kinetics. Our approach combines crowding-induced passive movement with metabolic interactions in a general $n$-species, $m$-substrate framework. 
	
	\subsection{Special cases} \label{sec:cases}
	
	In this section, we present how the general model \eqref{eq:generalPDEmodel} applies to three cases studied in the literature via IBMs \cite{K04,  MSSG23,MASSG22}: (i) competition, (ii) commensalism, and (iii) a mixed interaction system involving both competition and commensalism for a system of nitrifying bacteria. 
	
	\paragraph{Competition.} In this case, microorganisms compete for the same substrate (cf. Figure \ref{fig:schematic_competition}). This can lead to competitive exclusion, a central principle in microbial ecology  \cite{PF22}, where two species occupying similar niches cannot coexist and one will eventually exclude the other \cite{FR12}. 
	
	Let $n \geq 2$ bacterial species with densities $u_i(x,t), \, 1 \leq i \leq n$ compete for a single nutrient with concentration $c(x,t)$ where $(x,t) \in \Omega \times [0, +\infty), \, \Omega \in \R^2$. Then, the system \eqref{eq:generalPDEmodel} can be simplified to
	\begin{align}
		\label{eq:comp3}
		\begin{split}
			\frac{\partial }{\partial t} u_i &= \nabla \cdot \left[ d_i \nabla u_i + a_i u_i \nabla \rho \right] + u_i \left( r_i \frac{c}{K_i + c} - b_i \rho \right), \quad 1 \leq i \leq n , \\
			\frac{\partial }{\partial t} c &= D \Delta c - \sum_{i=1}^n u_i \frac{r_i}{Y_i} \frac{c}{K_i + c},
		\end{split}
	\end{align}
	where $\rho (x,t):= \sum_{i=1}^n u_i (x,t)$ is the total biomass density.
	System \eqref{eq:comp3} is complemented with the following initial and Neumann boundary conditions,
	\begin{alignat*} {4}
		u_i (x,0) &= u^0_i (x)  \quad &&\text{ for } x \in \Omega, \qquad \qquad &&\frac{\partial}{\partial \nu} u_i = 0  \quad  \qquad \, &&\text{  on } \partial \Omega, \quad 1 \leq i \leq n,\\
		c(x,0)   &= c^0 (x, 0)>0 &&\text{ for } x \in \Omega, \qquad &&\frac{\partial}{\partial \nu} c  = c^\infty >0  &&\text{  on } \partial \Omega.
	\end{alignat*}
    In the competition case, the IBMs typically produce columned stratification \cite{MASSG22, WHBI25, BL15}, with species segregating into radially oriented patches controlled by competition for space and access to the limiting substrate, e.g., \cite{MASSG22} for three bacterial species. This columned spatial structure allows less competitive populations to persist alongside faster-growing competitors, because the geometry ensures that even low-fitness cells retain access to the limiting resource. In environments with spatially varying substrate concentrations, bacteria that have a higher yield but grow more slowly when substrate is abundant appear to have a competitive advantage when substrate is scarce \cite{K04}.
    
    In Section \ref{sec:results}, we investigate System \eqref{eq:comp3} for $n=2$ and $n=3$ species numerically, and compare our results with the IBMs in \cite{K04, MASSG22}. Furthermore, in Section \ref{sec:derivation} we formally derive \eqref{eq:comp3} in one spatial dimension from an on-lattice biased random walk. For a parameter-symmetric version of \eqref{eq:comp3} we provide a local-in-time well-posedness result and a formal travelling wave analysis in Sections~\ref{sec:well_posedness} and \ref{sec:travelling_wave}, respectively.

	\paragraph{Commensalism.} In this setting, one organism benefits from another without affecting it in return. Such interactions are commonly found in biodegradation, in which some microorganisms cross-feed on by-products that are created by others \cite{FR12}. 
	
	We consider cooperating interactions between three microbial species with densities $u_i(x,t)$, $1\leq i \leq 3$, and three substrates with concentrations $c_j(x,t)$, $1\leq j \leq 3$, to facilitate comparison with \cite{MASSG22}. That is, Species $1$ consumes the externally supplied Substrate $1$ and produces Substrate $2$, which serves as nutrient for Species $2$. In turn, Species $2$ produces Substrate $3$, which is ultimately consumed by Species $3$ (cf. Figure~\ref{fig:schematic_commensalism}).
	
	In this case, System \eqref{eq:generalPDEmodel} takes the form,
	\begin{align}
		\label{eq:comm}
		\begin{split}
			\frac{\partial }{\partial t} u_i &= \nabla \cdot \left[ d_i \nabla u_i + a_i u_i \nabla \rho \right] + u_i \left(r_i \frac{c_i}{K_i + c_i} - b_i \rho  \right), \quad 1 \leq i  \leq 3,\\
			\frac{\partial }{\partial t} c_1 &=D_1 \Delta c_1 - u_1 \frac{r_1}{Y_1} \frac{c_1}{K_1 + c_1},\\
			\frac{\partial }{\partial t} c_2 &= D_2 \Delta c_2 - u_2 \frac{r_2}{Y_2} \frac{c_2}{K_2 + c_2} +\sigma_{12} u_1 \frac{r_1}{Y_1} \frac{c_1}{K_1 + c_1}, \\
			\frac{\partial }{\partial t} c_3 &= D_3 \Delta c_3 - u_3 \frac{r_3}{Y_3} \frac{c_3}{K_3 + c_3} +\sigma_{23} u_2 \frac{r_2}{Y_2} \frac{c_2}{K_2 + c_2},
		\end{split}
	\end{align} where $\rho (x,t) := \sum_{i=1}^3 u_i (x,t)$ is the total biomass density,  $ \sigma_{12}, \sigma_{23} >0$ are the metabolic substrate conversion rates from $c_1 \to c_2$ and $c_2 \to c_3$, respectively. We complement \eqref{eq:comm} with the initial and Neumann boundary conditions given by
	\begin{alignat*}{4}
		u_i (x,0) &= u^0_i (x)  &&\text{for } x \in \Omega, \qquad \qquad&&\frac{\partial}{\partial \nu} u_i = 0  \quad &&\text{  on } \partial \Omega, \quad 1\leq i \leq 3, \\
		c_1(x,t)  &= c_1^0 >0\quad   &&\text{for } x \in \Omega, \qquad   &&\frac{\partial}{\partial \nu} c_1  = c_1^\infty >0 \, &&\text{  on } \partial \Omega,\\
		c_j(x,t)   &= 0 &&\text{for } x \in \Omega, \qquad   &&\frac{\partial}{\partial \nu} c_j  = 0  &&\text{  on } \partial \Omega, \quad 2 \leq j \leq 3.
	\end{alignat*}
	
    In commensal scenarios, the IBM in \cite{MASSG22} exhibits a spatial structure referred to as \textit{layered stratification}: populations form concentric radial layers within the aggregate, organised according to the order in which substrates are consumed.
    In Section \ref{sec:results}, we study \eqref{eq:comm} numerically and compare our results with the IBM studied in \cite{MASSG22}. 
	
	\paragraph{Mixed ecological interactions involved in nitrification.} 
	For our final scenario, we consider a well-studied system of nitrifying bacteria, which is characterised by both commensal and competitive interactions. Nitrification has long been believed to be a two-step process in which ammonia-oxidising bacteria (AOB) convert ammonium ($\ce{NH_4^+}$) into nitrite ($\ce{NO_2^-}$), and subsequently, nitrite-oxidising bacteria (NOB) convert nitrite ($\ce{NO_2^-}$) into nitrate ($\ce{NO_3^-}$), see, e.g., \cite{SK16}. Today it is known that certain species perform complete ammonia oxidation, directly converting ammonium ($\ce{NH_4^+}$) into nitrate ($\ce{NO_3^-}$), e.g., \cite{KSANOKJL15, DLP15}. These species are referred to as comammox (CMX) organisms. The stoichiometric relations governing these three oxidation processes are given by 
	\begin{align}
		\ce{ NH_4 ^+ + 1.5 O_2  &-> NO_2^- + 2H ^+ + H_2 O }, \tag{AOB}\\
		\ce{NO^-_2 + 0.5 O_2 &-> NO^-_3}, \tag{NOB}\\
		\ce{NH^+_4 + 2O_2 &-> NO_3^- + 2H^+ + H_2O}.  \tag{CMX}
	\end{align}  
	While AOB and NOB form a partially commensal relationship, AOB and CMX compete directly for ammonium $\ce{NH_4^+}$. Moreover, oxygen ($\ce{O_2}$) is required for all three of those processes, putting all three of the species into direct competition for $\ce{O_2}$ (cf. Figure~\ref{fig:schematic_nitrification}).
    	\begin{figure}[ht!]
		\centering
		\includegraphics[width=0.65\textwidth]{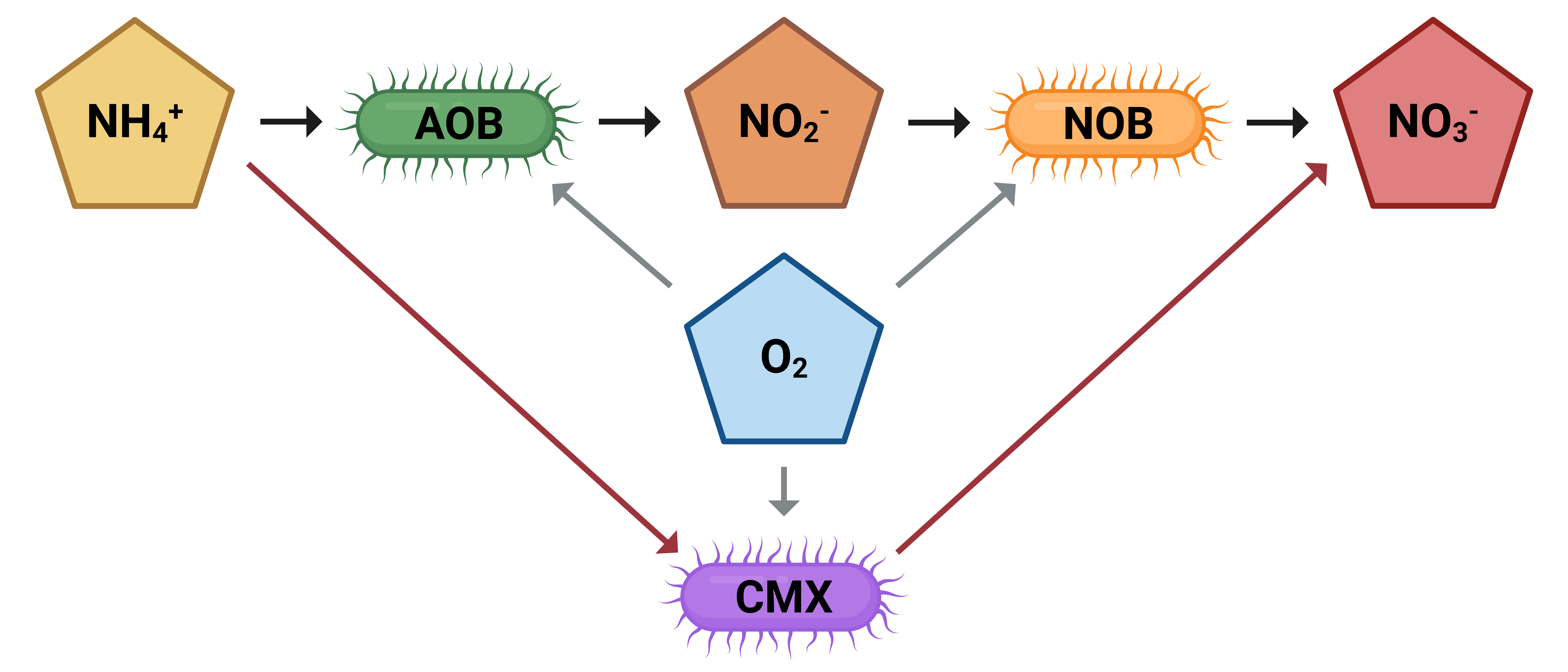}
		\caption{Two paths of nitrification: a two-step process performed by AOB and NOB (black top path), and complete ammonia oxidation by CMX (red bottom path). Ecological interactions include a commensal relation between AOB and NOB, competition between AOB and CMX for $\ce{NH_4^+}$, and finally, competition among all three species for $\ce{O_2}$.}
		\label{fig:schematic_nitrification}
	\end{figure}

        Applied to this example, System \eqref{eq:generalPDEmodel} comprises three equations for the bacterial species AOB, NOB, and CMX with densities $u_i(x,t)$, $1 \leq i \leq 3$, respectively, and four equations for the substrates ammomium $\ce{NH_4^+}$, nitrite $\ce{NO_2^-}$, nitrate $\ce{NO_3^-}$ and oxygen $\ce{O_2}$ with concentrations $c_j(x,t)$, $1\leq j \leq 4$, respectively. The model is then given by 
	\begin{align}
		\label{eq:mixed_interaction}
		\frac{\partial }{\partial t} u_1 &= \nabla \cdot \left[ d_1 \nabla u_1 + a_1 u_1 \nabla \rho \right] + u_1 \left(r_1 \frac{c_1}{K_{11} + c_1} \frac{c_4}{K_{14} + c_4} - b_1 \rho \right), \nonumber \\
		\frac{\partial }{\partial t} u_2 &= \nabla \cdot \left[ d_2 \nabla u_2 + a_2 u_2 \nabla \rho \right] + u_2 \left(r_2 \frac{c_2}{K_{22} + c_2} \frac{c_4}{K_{24} + c_4} - b_2 \rho \right), \nonumber\\
		\frac{\partial }{\partial t} u_3 &= \nabla \cdot \left[ d_3 \nabla u_3 + a_3 u_3 \nabla \rho \right] + u_3 \left(r_3 \frac{c_1}{K_{31} + c_1} \frac{c_4}{K_{34} + c_4} - b_3 \rho \right), \nonumber\\
		\frac{\partial }{\partial t} c_1 &=D_1 \Delta c_1 - u_1 \frac{r_1}{Y_{11}} \frac{c_1}{K_{11} + c_1}  \frac{c_4}{K_{14} + c_4} - u_3 \frac{r_3}{Y_{31}} \frac{c_1}{K_{31} + c_1} \frac{c_4}{K_{34} + c_4},  \tag{2.5}\\
		\frac{\partial }{\partial t} c_2 &= D_2 \Delta c_2 + \sigma_{12} u_1 r_1 \frac{c_1}{K_{11} + c_1} \frac{c_4}{K_{14} + c_4}-  u_2 \frac{r_2}{Y_{22}} \frac{c_2}{K_{22} + c_2} \frac{c_4}{K_{24} + c_4} ,  \nonumber\\
		\frac{\partial }{\partial t} c_3 &= D_3 \Delta c_3  +\sigma_{23} u_2 r_2 \frac{c_2}{K_{22} + c_2} \frac{c_4}{K_{24} + c_4} + \sigma_{13} u_3 r_3 \frac{c_1}{K_{31} + c_1} \frac{c_4}{K_{34} + c_4}, \nonumber\\
		\frac{\partial }{\partial t} c_4 &= D_4 \Delta c_4 - u_1 \frac{r_1}{Y_{14}} \frac{c_1}{K_{11} + c_1} \frac{c_4}{K_{14} + c_4} - u_2 \frac{r_2}{Y_{24}} \frac{c_2}{K_{22} + c_2} \frac{c_4}{K_{24} + c_4}- u_3 \frac{r_3}{Y_{34}} \frac{c_1}{K_{31} + c_1} \frac{c_4}{K_{34} + c_4}, \nonumber
	\end{align} 
	where $\rho (x,t) := \sum_{i=1,2,3} u_i (x,t)$ is the total biomass density and $\sigma_{12}, \sigma_{23}, \sigma_{13} >0$ are the metabolic substrate conversion rates from $c_1 \to c_2$, $c_2 \to c_3$ and $c_1 \to c_3$, respectively. The system is also equipped with initial and Neumann boundary conditions given by,
	\begin{alignat*}{4}
		u_i (x,0) &= u^0_i (x)  &&\text{for } x \in \Omega, \qquad \qquad&&\frac{\partial}{\partial \nu} u_i = 0  \quad &&\text{  on } \partial \Omega,  \quad 1 \leq i \leq 3,\\
		c_1(x,t)  &= c_1^0 >0\quad   &&\text{for } x \in \Omega, \qquad   &&\frac{\partial}{\partial \nu} c_1  = c_1^\infty >0 \, &&\text{  on } \partial \Omega,\\
		c_j(x,t)   &= 0 &&\text{for } x \in \Omega, \qquad   &&\frac{\partial}{\partial \nu} c_j  = 0  &&\text{  on } \partial \Omega, \quad 2 \leq j \leq 3,\\
		c_4(x,t)   &=  c_4^0 >0\quad &&\text{for } x \in \Omega, \qquad   &&\frac{\partial}{\partial \nu} c_4 = c_4^\infty >0  &&\text{  on } \partial \Omega.
	\end{alignat*} The boundary conditions above imply that ammonium $c_1$ and oxygen $c_4$ are provided externally, while both the nitrite $c_2, \, c_3$ are metabolic by-products.

	Even though we apply our model to this scenario, we reserve its detailed analysis for future study. The realistic parameter values operate on different time scales, suggesting that a time-scale separation should be incorporated into the numerical scheme.  Accommodating this in our current implementation would require nontrivial modifications. Moreover, the expected qualitative outcomes of the model depend sensitively on the specific parameter regimes and ecological configurations considered, making this case a natural subject for a separate, dedicated study.

	\section{Theoretical results} \label{sec:analysis} 
	In this section, we formally derive the competition system \eqref{eq:comp3} in one spatial dimension from an on-lattice biased random walk. Subsequently, we consider a parameter-symmetric case the competition system and provide a local-in-time well-posedness result as well as a formal travelling wave analysis.

	\subsection{Derivation from a lattice model} \label{sec:derivation}
	
	We formally derive the PDE system \eqref{eq:comp3} for $n$ bacterial species competing for one substrate from an on-lattice biased random walk in one spatial dimension. This derivation also serves as a justification for the specific form of the movement terms that we consider for bacteria motility. We employ standard techniques for deriving a PDE system from a biased random walk, following approaches used in, for example, \cite{OS97}. In particular, we draw attention to \cite{CLM20, CLM25}, where a similar pressure-gradient was incorporated into the derivation.

	We discretise the time $t \in \R^+$ and space $x\in \R $ variables as $t_k = k \Delta t$, $k \in \mathbb{N}$ and $x_l = l \Delta x$, $ l \in \mathbb{Z} $, respectively, where $0 <\Delta t, \Delta x \ll 1$ are the discretisation steps in time and space, respectively. We introduce the variables $N_i(x_l,t_k) \in \mathbb{N}$ and $S(x_l,t_k)\in \mathbb{N}$ to model the number of bacteria in the $i$-th population and the number of substrate units at $x_l$ at time $t_k$, respectively. Then, the density of the $i$-th population $u_i(x_l, t_k)$, the density of the total bacterial population $\rho(x_l, t_k)$ and the concentration of the substrate $c(x_l, t_k)$ can be computed as, for $1 \leq i \leq n$,
	\begin{align*}
		u_i(x_l,t_k) = u_{il}^k =\frac{N_i(x_l,t_k)}{\Delta x}, \qquad \rho(x_l,t_k) = \rho_{l}^k= \sum_{i=1}^n u_i(x_l,t_k), \qquad c(x_l,t_k) = c_l^k  = \frac{S(x_l,t_k)}{\Delta x},
	\end{align*}
	We assume that, at each time step, every individual has a certain probability of moving and it either proliferates or is consumed, in line with the on-lattice random-walk rules that we formulate below.
	\paragraph{Movement of bacteria.} Individual bacteria may move in both directions on the one-dimensional lattice. We model the tendency of being shoved away through crowding as increased probability to actively move away from higher population densities. 
	We define the probability of species $i$ moving from a position $x_l$ to an adjacent point on the grid $x_{l \pm 1}$ at a time $t_k$, for $1\leq i \leq n $, by
	\begin{equation} 
		\label{eq:prob_jumpU}
		\mathcal{P}_i (x_l \to x_{l\pm1})= \tilde{d}_i + \tilde{a}_i\lp  \rho (x_l, t_k) - \rho(x_{l\pm 1}, t_k)\rp_+, 
	\end{equation}
	with $\tilde{d}_i,\tilde{a}_i>0$, $(\cdot)_+ := \max \{\cdot, 0\}$.
	The first constant $\tilde{d}_i$ of the movement probability corresponds to purely random movement along the grid, while the second term reflects the bias along the population density gradient with scaling $\tilde{a}_i$.
	The probability of not jumping and remaining at position $x_l$ at time $t_k$ then becomes 
	\begin{equation}
		\label{eq:prob_notmoveU}
		\mc P_i^{\text{stay}} = 1 - \mc P_i (x_l \to x_{l+1}) - \mc P_i (x_l \to x_{l-1}). 
	\end{equation}
	
	\paragraph{Growth and decay of bacteria.} All bacterial species rely on the consumption of substrate to grow. The probability of bacteria of species $i$ proliferating within a time interval $\Delta t$ thus depends on the local concentration of the substrate $c$, where we assume growth to follow Monod kinetics, i.e.,
	\begin{equation}
		\label{eq:prob_growU}
		\mathcal{P}_i^{\text{grow}} = \Delta t r_i \frac{c(x_l,t_k)}{K_i + c(x_l,t_k)},
	\end{equation}
	with maximum growth rate $r_i>0$ and half-saturation concentration $K_i>0$. We furthermore assume that the probability of decaying within a timestep $\Delta t$ is density-dependent, where higher total population densities corresponds to a higher decay probability, i.e.,
	\begin{equation}
		\label{eq:prob_decayU}
		\mathcal{P}_i^{\text{decay}} = \Delta t b_i \rho(x_l,t_k),
	\end{equation}
	with decay rate $b_i>0$. Bacteria can also stay quiescent, i.e., neither grow nor decay with probability,
	\begin{equation}
		\label{eq:prob_quiescU}
		\mathcal{P}_i^{\text{quiescence}} = 1 - \mathcal{P}_i^{\text{decay}} - \mathcal{P}_i^{\text{grow}}.
	\end{equation}
	
	\paragraph{Diffusion of the substrate.}  The chemical substrate diffuses freely in the domain, which is represented by an unbiased random walk on the grid, i.e., the probability of a substrate moving from $x_l$ to an adjacent position $x_{l \pm 1}$ is given by a constant
	\begin{equation}
		\label{eq:prob_diffusionC}
		\mathcal{Q}^{\text{move}} = \tilde{D},
	\end{equation}
	where $0< \tilde{D}\leq 1/2$. The probability for the substrate not moving is then
	\begin{equation}
		\label{eq:prob_notmovingC}
		\mathcal{Q}^{\text{stay}} = 1- 2 \tilde{D}.
	\end{equation}
	
	\paragraph{Consumption of the substrate.} The substrate serves as growth-limiting nutrient and is consumed during bacterial proliferation. Hence, we couple the probability of consumption per time $\Delta t$ to the proliferation probabilities and scale them by species-specific yield coefficients in the following way,
	\begin{equation}
		\label{eq:prob_consumptionC}
		\mathcal{Q}^{\text{consumed}} = \Delta t \sum_{i=1}^n \frac{r_i}{Y_i} \frac{c(x_l, t_k)}{K_i + c(x_l, t_k)},
	\end{equation}
	where $Y_i>0$ is the yield coefficient (produced biomass per unit substrate) for the $i$-th species.

	Finally, to ensure that all probabilities \eqref{eq:prob_jumpU}-\eqref{eq:prob_consumptionC} lie in the interval $[0,1]$, we choose the time step $\Delta t$ and the jump parameters $\tilde{d}_i$, $\tilde{a}_i$ sufficiently small. Then, the master equations for the $i$-th species, $1\leq i \leq n $, reads as follows, 
	\begin{multline}
		\label{eq:masterP}
		u_i(x_l,t_{k+1}) - u_i(x_l,t_k) \\ =  \sum_{y = x_{l\pm 1 }}   \left[ \mathcal{P}_i (y \to x_l) u_i(y,t_k) - \mathcal{P}_i (x_l \to y) u_i(x_l,t_k) \right] 	+ \lp \mathcal{P}_i^{\text{grow}} - \mathcal{P}_i^{ \text{decay}} \rp  u_i(x_l,t_k),
	\end{multline} where the first term on the right hand side comes from the biased random walk, while the second one accounts for the net growth. Similarly, the master equation for the substrate $c$ is given by
	\begin{equation*}
		c(x_l, t_{k+1}) - c (x_l,t_k) =  \mathcal{Q}^{\text{move}} \lp \sum_{y = x_{l \pm 1}} c(y,t_k) - 2c(x_l,t_k) \rp  - \mathcal{Q}^{ \text{consumed}}c(x_l,t_k).
	\end{equation*}
	In the remainder of this section, we will use the following, since $\Delta t, \Delta x$ are chosen to be sufficently small, for $1 \leq i \leq n$,
	\begin{align*}
		x_l \approx x, \quad x_{l \pm 1 } \approx x &\pm \Delta x, \quad t_k \approx t, \quad t_{k \pm 1 } \approx t \pm \Delta t, \\ u_i(x_l,t_k) \approx u_i(x,t), \quad &c(x_l,t_k) \approx c(x,t), \quad \rho(x_l,t_k) \approx \rho(x,t).
	\end{align*}
	First, we demonstrate the derivation of a continuum equation for the density of the $i$-th species. We start by substituting the probabilities \eqref{eq:prob_jumpU}-\eqref{eq:prob_quiescU} into \eqref{eq:masterP}, and we obtain
	\begin{equation*}
		\begin{split}
			u_i (x, t+ \Delta t) - u_i (x,t) =& \quad  \lp  \tilde{d}_i +  \tilde{a}_i  \lp  \rho (x+\Delta x, t) - \rho (x,t) \rp_+ \rp u_i (x+ \Delta x, t) \\
			& +  \left( \tilde{d}_i +  \tilde{a}_i   \lp  \rho (x - \Delta x, t) - \rho (x,t) \rp_+\right) u_i (x- \Delta x, t) \\
			& - u_i (x,t)  \left( 2\tilde{d}_i +  \tilde{a}_i   \left[  \lp \rho (x,t)  - \rho (x + \Delta x, t) \rp_+ + \lp \rho (x,t )- \rho (x - \Delta x, t) \rp_+\right] \right) \\
			& + u_i(x,t) \Delta t \left( r_i \frac{c(x, t)}{K_i + c(x, t)}  -  b_i \rho(x,t) \right).
		\end{split}
	\end{equation*}
	Assuming that $u_i(x,t)$, $1\leq i \leq n$, and $c(x,t)$ are sufficiently smooth with respect to $x$ and $t$, we can employ Taylor expansions of $\rho (x \pm \Delta x, t), u_i(x \pm \Delta x, t)$ about $(x,t)$ up to the second order in $x$ and to first order in $t$. %Omitting terms with order $\Delta x^3$ or higher, we can expand the expression above to
%\begin{equation*}
%\begin{split}
%    \Delta t \, \partial_t u_i(x,t) + \mathcal{O}(\Delta t^2)= & \quad u_i (x,t)    \tilde{a}_i \left[ \lp \Delta x \frac{\partial }{\partial x} \rho+\frac{\Delta x^2}{2} \frac{\partial ^2}{\partial x^2} \rho \rp _+ + \lp -\Delta x \frac{\partial }{\partial x} \rho + \frac{\Delta x^2}{2} \frac{\partial ^2}{\partial x^2} \rho \rp_+ \right]\\
%    & - u_i(x,t) \tilde{a}_i \left[ \lp -\Delta x \frac{\partial }{\partial x} \rho - \frac{\Delta x^2}{2} \frac{\partial ^2}{\partial x^2} \rho \rp _+ - \lp \Delta x \frac{\partial }{\partial x} \rho - \frac{\Delta x^2}{2} \frac{\partial ^2}{\partial x^2} \rho \rp _+ \right]  \\
%    & + \Delta x \, \frac{\partial}{\partial x} u_i(x,t)  \tilde{a}_i \left[ (\Delta x \frac{\partial }{\partial x} \rho)_+ - (-\Delta x \frac{\partial }{\partial x} \rho)_+ \right] + \frac{\Delta x^2}{2} \frac{\partial}{\partial x^2} u_i(x,t)  2 \tilde{d}_i  \\
%    & +  u_i(x,t) \Delta t \left( r_i \frac{c(x)}{K_i + c(x)}  -  b_i \rho (x,t) \right) + \mathcal{O}(\Delta x^3).
%\end{split}
%\end{equation*}
%Using $\lp \alpha  \rp_+  - \lp - \alpha \rp_+ = \alpha $, dividing both sides by $\Delta t$ and rearranging the terms give
Making use of $\lp \star \rp_+  - \lp - \star \rp_+ = \star$, above expression simplifies to
	\begin{align*}
		\Delta t \, \frac{\p}{\p t} &u_i(x,t)  + \mc O  (\Delta t^2) =  \tilde d_i \lp \Delta x^2 \frac{\p^2}{\p x^2} u_i(x,t) + \mc O (\Delta x^3) \rp  + \tilde a_i \lp \Delta x^2 \frac{\p^2}{\p x^2} \rho (x,t) + \mc O (\Delta x^3)\rp u_i (x,t) \\
		&+ \tilde a_i \lp \Delta x \frac{\p}{\p x} \rho (x,t) + \frac{\Delta x ^2}{2} \frac{\p^2}{\p ^2x }\rho (x,t) \rp_+ \lp \Delta x \frac{\p}{\p x} u_i(x,t) + \frac{\Delta x^2}{2} \frac{\p^2}{\p x^2} u_i(x,t) + \mc O (\Delta x^3)\rp \\
		&+ \tilde a_i \lp -\Delta x \frac{\p}{\p x} \rho (x,t) + \frac{\Delta x ^2}{2} \frac{\p^2}{\p x^2 }\rho (x,t)   \rp_+ \lp - \Delta x \frac{\p}{\p x} u_i(x,t) + \frac{\Delta x^2}{2} \frac{\p^2}{\p x^2} u_i(x,t) + \mc O (\Delta x^3)\rp\\
		&+ u_i(x,t) \Delta t \left( r_i \frac{c(x, t)}{K_i + c(x, t)}  -  b_i \rho(x,t) \right).
	\end{align*} 
	A further simplification yields
	\begin{align*}
		\Delta t \, \frac{\p}{\p t} u_i(x,t)  &=  \tilde d_i  \Delta x^2 \frac{\p^2}{\p x^2} u_i(x,t)  + \tilde a_i \Delta x^2 \frac{\p^2}{\p x^2} \rho (x,t)  u_i (x,t) 
		+ \tilde a_i  \Delta x^2 \frac{\p}{\p x} \rho(x,t)  \frac{\p}{\p x} u_i(x,t) \\
		&+ u_i(x,t) \Delta t \left( r_i \frac{c(x, t)}{K_i + c(x, t)}  -  b_i \rho(x,t) \right) + \mc O (\Delta x^3)  +\mc O (\Delta t^2).
	\end{align*} 
	Dividing both sides by $\Delta t$, and rearranging the terms give
	\begin{align*}
		\frac{\p}{\p t} u_i(x,t) =   &\frac{\Delta x^2}{\Delta t} \left [  \tilde{d}_i \frac{\p^2}{\p x^2} u_i(x,t)  + \tilde a_i  \lp u_i(x,t) \frac{\p^2}{\p x^2} \rho (x,t)+\frac{\p}{\p x} u_i(x,t)  \frac{\p}{\p x} \rho (x,t)\rp \right ]  
		\\&+
		u_i(x,t)  \lp  r_i \frac{c(x, t)}{K_i + c(x, t)}  -  b_i \rho(x,t)  \rp  + \mc O \lp \frac{\Delta x^3}{\Delta t}\rp + \mc O (\Delta t).
	\end{align*} 
	Letting $\Delta x\to 0$, $\Delta t\to 0$, we formally take the limit such that
	\begin{align*}
		\lim_{\Delta x, \Delta t \to 0} \frac{\Delta x^2}{\Delta t} \tilde{d}_i := 	d_i,  \qquad  \lim_{\Delta x, \Delta t \to 0} \frac{\Delta x^2}{\Delta t} \tilde{a}_i :=  a_i, \quad 1 \leq i \leq n.
	\end{align*} Therefore, we obtain formally the system of PDEs,
	\begin{equation}
		\label{eq:derivation_FinalU}
		\begin{split}
			\partial_t u_i = & \quad \frac{\partial}{\partial x} \left[d_i \frac{\partial}{\partial x} u_i + a_i u_i \frac{\partial}{\partial x}  \rho \right]  + u_i  \left( r_i \frac{c}{K_i + c}  -  b_i \rho \right), \quad 1 \leq i \leq n.
		\end{split}
	\end{equation}
	In a similar manner, we expand the master equation for $c(x,t)$ to obtain, 
	\begin{equation*}
		\begin{split}
			\frac{\partial}{\partial t} c(x,t)  = &  \tilde{D} \frac{\Delta x^2}{\Delta t }  \frac{\partial^2}{\partial x^2} c(x,t) 
			+ \sum_{i=1}^n \frac{r_i}{Y_i} \frac{c(x,t)}{K_i +c(x,t)} + \mathcal{O}\left(\frac{\Delta x^3}{\Delta t}\right) + \mathcal{O}(\Delta t),
		\end{split}
	\end{equation*}
	and, finally, after formally taking the limits $\Delta x\to 0$, $\Delta t\to 0$ such that $\lim_{\Delta x, \Delta t \to 0} \frac{\Delta x^2}{\Delta t} \tilde{D}:= D$, we complete the system with the PDE for the substrate,
	\begin{equation}
		\label{eq:derivation_FinalC}
		\frac{\partial}{\partial t} c =  D  \frac{\partial^2}{\partial x^2} 
		+ \sum_{i=1}^n \frac{r_i}{Y_i} \frac{c}{K_i +c} .
	\end{equation}
	System \eqref{eq:derivation_FinalU}-\eqref{eq:derivation_FinalC} constitutes the one-dimensional version of the $n$-species competition model \eqref{eq:comp3}.
	
	\begin{remark} 
		Although this derivation is performed only for a subsystem of \eqref{eq:generalPDEmodel} in one spatial dimension, two-dimensional cases with different ecological interaction mechanisms can be derived similarly. For the derivation of the general PDE model \eqref{eq:generalPDEmodel}, one can consider an arbitrary number of substrates and introduce generic functions for the probabilities of bacterial growth and substrate consumption, together with an additional probability for substrate production. The subsequent steps of the derivation proceed analogously.
	\end{remark}
	
	\subsection{Local well-posedness} \label{sec:well_posedness}
	In this section, we provide a local-in-time well-posedness result for a particular case of the competition system \eqref{eq:comp3} for the total biomass. The main result follows from Amann's abstract framework for quasilinear parabolic boundary value problems, \cite{Am90}. 
	
	We consider the competition system \eqref{eq:comp3} with parameter symmetry, i.e., $d:= d_i=d_j$, $a:= a_i = a_j$, $r:= r_i=r_j$, $K:= K_i = K_j$, $Y:= Y_i = Y_j$, for all $i \neq j$, $1\leq i,j \leq n$, for the total biomass density $\rho(x,t):= \sum_{i}^nu_i (x,t)$, with $(x,t) \in \Omega \times [0, \infty)$, and where $\Omega \subset \R^2$ is an open set. A single substrate $c(x,t)$  is consumed by all the species. This system is governed by 
	\begin{align}
		\label{eq:comp_1_2d}
		\begin{split}
			\frac{\partial }{\partial t} \rho &= \nabla \cdot \left[ (d + a \rho) \nabla \rho \right] + f(\rho, c), \quad x \in \Omega, \, t>0,\\
			\frac{\partial }{\partial t} c &= D \Delta  c - g(\rho, c), \qquad \qquad \qquad x \in \Omega, \, t>0,
		\end{split}
	\end{align} where $d, a,r,K,b, Y$ are all positive constants and the reaction terms $f(\rho, c)$ and $g(\rho,c)$ are given by 
	\begin{align}\label{eq:reactions}
		f(\rho, c) = 	\rho \lp r \frac{c}{K + c} - b \rho \rp, \qquad g(\rho, c) = \rho \frac{r}{Y} \frac{c}{K + c}.
	\end{align}
	System \eqref{eq:comp_1_2d} is complemented with the initial and homogeneous Neumann boundary conditions,
	\begin{align} \label{eq:comp_1_2d_init_bdry}
		\begin{alignedat}{2}
			\rho (x, 0)&= \rho_0 (x) \geq 0, \quad x \in \Omega,  \qquad  \qquad \frac{\p}{\p \nu} \rho(x,t) &= 0, \quad  \text{  on } \partial \Omega\\
			c (x,0) &= c_0 (x) \geq 0, \quad  x \in \Omega,  \qquad  \qquad  \frac{\p}{\p \nu} c(x,t) &= 0,  \quad \text{  on } \partial \Omega,
		\end{alignedat}
	\end{align} where $\nu$ denotes the outward normal unit vector.
	We cast this system into the general framework followed in \cite{Am90}. Defining $u : = (\rho , c)^\top : \Omega  \to \R^2$, System \eqref{eq:comp_1_2d}-\eqref{eq:comp_1_2d_init_bdry} can be rewritten as
	\begin{align}	\label{eq:divergence_form}
		\begin{alignedat}{2}
			\frac{\p}{\p t} u  + \mc A(u) u &= F(u), \quad &x \in \Omega, \, t >0,\\ 
			\mc B(u)u &= 0, \quad &\text{on } \p \Omega, \, t\geq 0, \\
			u (x, 0) &= u_0(x),  \quad &x \in \Omega, \,  t =0, 
		\end{alignedat} 
	\end{align}where $(\mc A (u), \mc B (u))$ is of separated divergence form, i.e, defining the matrix $A(u):=\lp \begin{smallmatrix} d + a \rho & 0 \\ 0 & D\end{smallmatrix} \rp $, we can write 
	\begin{align*}
		\mc A(u) u = -\nabla \cdot \lp A(u) \nabla u\rp = 
		\begin{pmatrix}
			- \nabla \cdot \left [  \lp d + a \rho \rp \nabla \rho \right] \\
			- D \Delta c
		\end{pmatrix}, \qquad 
		\mc B (u) u = \lp A(u )\frac{\p}{\p \nu} u\rp =0,
	\end{align*} and $F(u) = \lp f(\rho, c), -g(\rho,c) \rp^\top $. We choose an open set $G := \left \{(\rho, c) \in \R^2 \, : \, \rho > -d/ 2a , \, c > -K/2\right \}$ which contains the biologically relevant region $(\rho, c) \in [0, \infty) \times [0, \infty) $. On this set, the eigenvalues of $A$, $\lambda_1 = d + a \rho \geq d/2>0$ and $\lambda_2 =D >0$ are both strictly positive and uniformly bounded away from $0$. Therefore, $(\mc A (u), \mc B (u))$ is uniformly elliptic and all three assumptions of \cite{Am90} are satisfied. 
\begin{remark}
	In particular, our system \eqref{eq:comp_1_2d}-\eqref{eq:comp_1_2d_init_bdry} fits within the general SKT (Shigesada-Kawasaki-Teramoto) cross-diffusion system given as a demonstrating example in \cite[Equation 7]{Am90} where $(v,w) = (\rho,c)$ and $\alpha_1 = d$, $\beta_{11} = a/2$, $\alpha_2 = D$, $ \beta_{12} = \beta_{21} = \beta_{22} = \gamma_{1} = \gamma_{2} = 0$, $h_1 (\rho, c) = f(\rho, c)/ \rho$, $ h_2(\rho,c) = -g(\rho,c)/ c$. 
\end{remark}
Finally, we define $V:=  \{ u \in H^{1,p}  \, :  \, u \in G \text{  for all  } x \in \Omega, \}$ and state the main result of this section: 
	
	\begin{thm}[Local-in-time well-posedness for \eqref{eq:comp_1_2d}-\eqref{eq:comp_1_2d_init_bdry}] 
		Let $\Omega \subset \R^2$ be a non-empty, bounded set with $C^2$ boundary. Consider System \eqref{eq:comp_1_2d}-\eqref{eq:comp_1_2d_init_bdry} where $d, a,r,K,b, Y$ are all positive constants, $(x,t) \in \Omega \times (0,\infty)$, with $u_0 = (\rho_0, c_0 )\in V$. Suppose that $\rho_0 \geq 0, c_0\geq 0$ and $p>2$. Then there exists a maximal time $T_{\text{max}} >0$ and a unique solution $u = (\rho,c) \in C ( [ 0, T_{\text{max}}); W^{1,p} ) \cap C^\infty (\bar \Omega \times (0, T_{\text{max}}); \mathbb{R}^2 ) $ of \eqref{eq:comp_1_2d}-\eqref{eq:comp_1_2d_init_bdry}. Moreover, $\rho (x,t) \geq 0, c(x,t) \geq 0$ for all $(x,t) \in \bar \Omega \times [0, T_{\text{max}})$.
	\end{thm}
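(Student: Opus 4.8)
The plan is to read the statement essentially as a corollary of Amann's abstract theory \cite{Am90}, since the preceding discussion has already recast \eqref{eq:comp_1_2d}--\eqref{eq:comp_1_2d_init_bdry} into the divergence form \eqref{eq:divergence_form} and verified uniform (normal) ellipticity on the open set $G$. Accordingly I would split the argument into three parts: (i) checking that the structural hypotheses of \cite{Am90} hold for $(\mathcal A,\mathcal B)$ and $F$ with the function-space data $(W^{1,p},V)$, $p>2$; (ii) quoting Amann's local existence, uniqueness and smoothing theorem to obtain $T_{\max}$ and the maximal solution with the claimed regularity; and (iii) proving the nonnegativity $\rho,c\ge 0$, which is the only genuinely new ingredient not contained in the abstract machinery.

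For (i), the matrix $A(u)=\mathrm{diag}(d+a\rho,\,D)$ is diagonal with eigenvalues bounded below by $d/2$ and $D$ throughout $G$, so the pair is normally elliptic; the map $u\mapsto A(u)$ is affine, hence smooth, and $F=(f,-g)^\top$ is smooth on $G$ because $K+c>K/2>0$ there keeps the Monod factor $c/(K+c)$ smooth. The boundary operator $\mathcal B(u)u=A(u)\,\partial_\nu u$ is exactly the conormal (Neumann-type) operator covered by Amann's class, and since $p>2=\dim\Omega$ we have $W^{1,p}\hookrightarrow C(\bar\Omega)$, so $G$ open in $\R^2$ makes $V$ open in $W^{1,p}$ and $u_0\in V$ an admissible initial value. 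With these verifications, step (ii) is a direct application of the local well-posedness theorem in \cite{Am90}: it yields a maximal time $T_{\max}>0$ and a unique maximal solution $u\in C([0,T_{\max});W^{1,p})$, while parabolic smoothing with smooth coefficients upgrades it to $u\in C^\infty(\bar\Omega\times(0,T_{\max});\R^2)$.

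For (iii) I would argue by an energy/Stampacchia estimate rather than a quasilinear maximum principle. Fix $T<T_{\max}$; since $u$ is continuous into $W^{1,p}\hookrightarrow C(\bar\Omega)$ and the solution stays in $G$, we have $d+a\rho\ge d/2$ and $K+c\ge K/2$ on $\bar\Omega\times[0,T]$, so both equations are uniformly parabolic with bounded coefficients there. Testing the $\rho$-equation with its negative part $\rho_-:=\min(\rho,0)$ and integrating by parts (the Neumann condition removing the boundary term) gives
\[
\tfrac12\tfrac{d}{dt}\|\rho_-\|_{L^2}^2 \;\le\; -\tfrac{d}{2}\|\nabla\rho_-\|_{L^2}^2 + \|h\|_{L^\infty}\|\rho_-\|_{L^2}^2,\qquad h:=r\tfrac{c}{K+c}-b\rho,
\]
where $h$ is bounded and, crucially, no sign information on $h$ is needed. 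As $\rho_-(\cdot,0)=0$ because $\rho_0\ge0$, Grönwall forces $\rho_-\equiv0$, i.e.\ $\rho\ge0$; the same test with $c_-$ in the (linearly diffusing) $c$-equation, writing $-g=-\kappa c$ with $\kappa:=\rho\,r/(Y(K+c))$ bounded, yields $c\ge0$. Letting $T\uparrow T_{\max}$ gives nonnegativity on all of $[0,T_{\max})$.

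I expect the main obstacle to be this last step, specifically the interplay between the quasilinearity and the requirement to remain in $G$: parabolicity of the $\rho$-equation relies on $\rho>-d/(2a)$, which is supplied \emph{a priori} by $u\in G$ (from Amann) rather than by the sign of $\rho$ we are trying to establish, so the argument must be run on $[0,T]$ with $T<T_{\max}$ and only uses boundedness, not sign, of the zeroth-order coefficients --- this is precisely why the energy method is the right tool. A secondary technical point is matching Amann's exact regularity and boundary hypotheses and justifying smoothing up to the boundary under only a $C^2$ boundary, for which one combines interior parabolic smoothing with the conormal boundary regularity theory.
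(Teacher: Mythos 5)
Your proposal follows essentially the same route as the paper: existence, uniqueness and regularity are obtained by casting the system in separated divergence form, verifying normal ellipticity and smoothness of the coefficients on the open set $G$, and invoking Amann's local well-posedness theorem, while nonnegativity is proved exactly as in the paper by testing each equation with the negative part of the unknown and applying Gr\"onwall's lemma with zero initial data. The only (minor) difference is that you justify the uniform bounds $d+a\rho\ge d/2$ and $K+c\ge K/2$ directly from the a priori containment of the Amann solution in $V$ (hence in $G$ pointwise) on all of $[0,T_{\max})$, whereas the paper re-derives this via a continuity argument with auxiliary maximal times $t_\rho$, $t_c$ and $\bar T=\min\{t_\rho,t_c,T_{\max}\}$; your shortcut is legitimate and, if anything, slightly cleaner.
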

	
	\begin{proof}
	The existence of of solutions up to a maximal time $T_{\text{max}}$ follows directly from the theory on parabolic quasilinear boundary value problems in  \cite[Theorem 1]{Am90}. System \eqref{eq:comp_1_2d}-\eqref{eq:comp_1_2d_init_bdry} is of `separated divergence form' \eqref{eq:divergence_form} and satisfies the hypotheses for Theorem 1. Therefore, Theorem 1 yields a unique maximal solution in $C([0, T_{\text{max}}); W^{1,p} ) \cap C^\infty (\bar{\Omega} \times (0, T_{\text{max}}); \mathbb{R}^2 ) $ of \eqref{eq:comp_1_2d}-\eqref{eq:comp_1_2d_init_bdry} where $0 < T_{\max} \leq \infty$.

    	To prove that solutions remain nonnegative over time, i.e, $\rho (x,t) \geq 0, c(x,t)\geq 0$ for all $(x,t) \in \bar \Omega \times (0, T_{\text{max}})$, we first define the negative parts of $\rho$ and $c$ by $\rho^-:=\max \{-\rho, 0\}$ and $c^-:=\max \{-c, 0\}$, respectively. We multiply the first equation in \eqref{eq:comp3} with $\rho^-$ and integrate over $\Omega$ (since $\rho$ is a classical solution for $t>0$), i.e.,
		\begin{align*}
			\int_{\Omega} \frac{\partial}{\partial t} \rho  \rho^{-}  \dx = \int_{\Omega} \nabla \cdot \lp \lp d + a \rho \rp \nabla \rho \rp \rho^{-} \dx + \int_{\Omega} \rho \lp \frac{r c}{K+c} - b \rho \rp  \rho^{-}\dx.
		\end{align*} 
		Since on the support of $\rho^-$, $\rho=-\rho^-$ we have 
		\begin{align*}
			\frac{1}{2} \frac{ \mathrm{d}}{ \mathrm{d} t } \int_{\Omega} \lp \rho^{-}\rp ^2 \dx = \int_{\Omega} \nabla \cdot \lp \lp d - a \rho^- \rp \nabla \rho^- \rp \rho^{-} \dx + \int_{\Omega} \lp \frac{r c}{K+c} + b \rho^- \rp  (\rho^{-})^2\dx.
		\end{align*}
	Since $\rho_0 \geq 0$, $d+ a \rho _0>0$ and $\rho$ is continuous, there exists a maximal time, $t_\rho >0$ such that for all $t \in (0, t_\rho)$ we have $\rho^- <d/(2a)$. Similarly, there exists a maximal time $t_c>0$ up to which $c^- < K/2$ holds. Now we take $\bar T: = \min \{t_\rho, t_c, T_{\text{max}} \}$, and argue that $\bar T = T_{\text{max}}$. Otherwise $\bar T < T_{\text{max}}$. For the first term on the right hand side, integration by parts and boundary conditions yield
		\begin{align*}
			\int_{\Omega} \nabla \cdot \lp \lp d - a \rho^- \rp \nabla \rho^- \rp \rho^{-} \dx = - \int_{\Omega} (d - a \rho^-) |\nabla \rho ^-|^2 \dx \leq 0,  \quad \text{ for any  } t \in [0, \bar T].
		\end{align*} Moreover, we also have 
		\begin{align*}
			\int_{\Omega} \lp \frac{r c}{K+c} + b \rho^- \rp  (\rho^{-})^2\dx \leq C_1 \int_{\Omega} (\rho^{-})^2\dx, \quad \text{ for any  } t \in [0, \bar T].
		\end{align*} where $C_1$ is a constant, since $\rho,c$ are bounded on $(x,t) \in \bar \Omega \times [0, \bar T]$. This gives 
		\begin{align*}
			\frac{ \mathrm{d}}{ \mathrm{d} t } \int_{\Omega} \lp \rho^{-}\rp ^2 \dx  \leq C \int_{\Omega} (\rho^{-})^2\dx, \quad \text{ for any  } t \in [0, \bar T].
		\end{align*}Gr\"onwall's lemma then yields
		\begin{align*}
			\int_{\Omega} (\rho^{-}(\cdot, t))^2\dx \leq \exp(C \bar T) \int_{\Omega} (\rho_0^{-})^2\dx  =0, \quad \text{for any  } t \in [0, \bar T]. 
		\end{align*} since $\rho_0^- = \rho^-(\cdot, 0)\equiv 0$. This implies that $\rho^- \equiv 0$ for any $t \in [0, \bar T]$; thus $t_\rho > \bar T$ since $t_\rho$ was defined as the maximal time such that $\rho^- <d/(2a)$. Using a similar argument we can show that $c^-\equiv 0$ for any $t \in [0, \bar T]$ and this implies $t_c > \bar T$. Therefore $\bar T = T_{\text{max}}$. 
	\end{proof}

   \begin{remark}
	Since the primary aim for this paper is to introduce our modelling framework, we restrict our analysis here to the local-in-time existence result. This level of detail suffices both for our numerical simulations and the real experiments conducted in bioreactors. However, to show that the solutions of  \eqref{eq:comp_1_2d}-\eqref{eq:comp_1_2d_init_bdry} exist globally in time, it is enough to prove that $ \sup _{0<t < T_{\max} } \| u(t) \|_{W^{1,p}} < \infty$, see \cite[page~18]{Am90}.  
	\end{remark}
	
	\subsection{Travelling wave analysis} \label{sec:travelling_wave}

	In this section, we carry out a formal travelling wave analysis in one spatial dimension of the coupled system for the total biomass and a substrate \eqref{eq:comp_1_2d}-\eqref{eq:reactions}, a parameter-symmetric variant of the competition case \eqref{eq:comp3}. The motivation for this analysis is the noticeable radially symmetric expansion of active biomass observed in Figures~\ref{fig:comp3} and \ref{fig:commensalism_u}. The computed minimal wave speed is later used for comparison with numerically measured propagation speed.
	
	The spatial expansion is driven by proliferation and subsequent taxis of biomass down the gradient of the population density. This movement process naturally creates two separate subdomains: the non-invaded region, where the substrate is close to its initial concentration and biomass is close to zero; and the invaded region where the biomass has started to deplete the substrate. Although the exact distribution of species within the total biomass depends on the initial conditions, the expansion effect for the total biomass remains the same.    
	
	Note that in one spatial dimension, i.e., $x \in \R$, system \eqref{eq:comp_1_2d} reads as
	\begin{equation}
		\label{eq:comp_wave}
		\begin{split}
			\frac{\partial }{\partial t} \rho &= \frac{\partial }{\partial x} \left[ (d + a \rho) \frac{\partial }{\partial x} \rho \right] + f(\rho, c), \\
			\frac{\partial }{\partial t} c &= D \frac{\partial ^2 }{\partial x^2 } c - g(\rho, c),
		\end{split}
	\end{equation} where $d, a,r,K,b, Y$ are all positive constants and the reaction terms $f(\rho, c)$ and $g(\rho,c)$ are given by \eqref{eq:reactions}. Moreover, \eqref{eq:comp_wave} is complemented with 
	\begin{equation} 	\label{eq:comp_wave_init_bdry}
		\begin{alignedat}{2}
			\rho (0,x) &= \begin{cases}
				\rho_0 (x)> 0, \, \, \text{for} \, x \leq 0, \\
				0 , \qquad \qquad  \text{for} \,  x > 0,
			\end{cases}   \qquad  &\frac{\p}{\p \nu} \rho(t,0) = 0, \,  \, \text{for} \,  t \geq 0\\
			c (0,x) &= c_0 (x)>0, \, \, \, \,  \,  \, \,\text{for} \, x \in \R,  \qquad  \qquad  \, \, \, &\frac{\p}{\p \nu} c(t,0) = 0,  \, \, \text{for} \, t \geq 0.
		\end{alignedat}
	\end{equation}
	Plugging a travelling wave ansatz $z:= x - vt, $ into \eqref{eq:comp_wave}-\eqref{eq:comp_wave_init_bdry} with traveling wave speed $v \in \R$, so that $U(x-vt) = \rho(x,t)$ and $C(x-vt) = c(x,t)$, where $U,C: \mathbb{R}\to \mathbb{R}$ are continuous and non-negative functions. We obtain the following system of ordinary differential equations, 
	\begin{equation}
		\label{eq:z_system}
		\begin{split}
			-vU'&= dU'' + a \left[ (U')^2 +  UU'' \right] + f(U,C),\\
			- v C' &= D C'' - g(U,C).
		\end{split}
	\end{equation}
	where the reaction terms are defined by \eqref{eq:reactions}. 
	
	When we use the ansatz $z= x-vt$, we implicitly lose the time dependence of the variable. However, we can identify asymptotic conditions ahead of the traveling wave in the limit as $z \to +\infty$,
	\begin{equation*}
		\begin{split}
			\lim_{z \to + \infty} \left( U(z), C(z) \right) = \left(0, c_0 \right), 
		\end{split}
	\end{equation*}
	where $c_0$ is the initial concentration of the substrate. This limit describes asymptotically the behaviour of the system ahead of the traveling wave - in the non-invaded region - where the substrate concentration is close to the asymptotic concentration $C \approx c_0$ and $U$ is small. We then consider the following asymptotic expansion about the (unstable) steady state $(0,c_0)$,
	\begin{align*}
		\begin{split}
			U &\approx \eps U_1 + \eps^2 U_2 + \cdots,\\
			C &\approx c_0 + \eps C_1 + \eps^2 C_2 + \cdots, 
		\end{split}
	\end{align*}
	where $\eps>0$ is a small constant. Substituting these into \eqref{eq:z_system}, we obtain at the first order of $\eps$,
	\begin{align*}
		\begin{split}
			d U_1'' + v U_1' + r U_1 \frac{c_0}{c_0 + K} &=0,\\
			D C_1'' + vC_1'-  U_1 \frac{r}{Y} \frac{c_0}{c_0 + K} &=0.
		\end{split}
	\end{align*}
	Notice that the first equation is now decoupled from the second equation and in particular, the first equation corresponds to the linearisation of the FKPP equation around the unstable steady state. We consider the ansatz $U_1 (z) = u_1^0 e^{ -\lambda z}$, $u_1^0>0, \lambda >0$ with $\lambda$ satisfying  
	\begin{equation*}
		\lambda^2 - \lambda \frac{v}{d} + \frac{r}{d} \frac{c_0}{c_0 + K} = 0,
	\end{equation*}
	which has the roots $\lambda_{1,2} = \frac{v}{2d} \pm \frac{1}{2d}\sqrt{v^2-4dr \frac{c_0}{c_0+K}}$. In case of complex $\lambda$, the solution oscillates around the steady state $0$, which is not biologically meaningful. Therefore, biologically meaningful solutions have a minimal traveling wave speed
	\begin{equation}
		\label{eq:wave_speed}
		v_{\text{min}} := 2  \sqrt{dr \frac{c_0}{K+c_0}}  .  
	\end{equation}
	
	\section{Numerical results} \label{sec:results}
	
	In this section, we show that our PDE framework \eqref{eq:generalPDEmodel}-\eqref{eq:generalPDE_IC_BC} for the special cases introduced in Section \ref{sec:cases} reproduces key behaviours that have previously been observed through IBMs \cite{K04, MASSG22}. We focus on three scenarios: competition in a three-species model without advantage of one over the others, competition between two species with different growth-yield strategies, and commensalism between three species (cf. Figure~\ref{fig:schematic}). In addition, we compare the travelling wave speed that we measure in numerical simulations for System \eqref{eq:comp_wave}-\eqref{eq:comp_wave_init_bdry} with the minimal travelling wave speed \eqref{eq:wave_speed}.

	\subsection{Competition}
	\label{ch:results_competition}
	We investigate competition scenarios from two perspectives: first, the dynamics of three competing populations within a microbial aggregate, and second, the interaction between two microbial populations with contrasting survival strategies. Both cases will be qualitatively compared to their individual-based counterpart in the existing literature
	(cf. \cite{MASSG22} for the three microaggregate species, \cite{K04} for different growth strategies).
	
	\paragraph{(A) Three-species competiton} \label{ch:results_competition3}

	In this competition setting, we consider three bacterial populations that rely on a single externally provided substrate, i.e. System \eqref{eq:comp3} for $n=3$. To ensure that no population has an intrinsic advantage, similarly to the setup in \cite{MASSG22}, we impose symmetry in the model parameters, with identical growth rates and half-saturation concentrations for all species ($r_1=r_2=r_3$, $K_1=K_2=K_3$; cf.~Table~\ref{tab:parameters} Case (A)). The initial condition (cf.~Figure~\ref{fig:comp3}) consists of three distinct inocula placed within a central circular region of the domain, while the substrate is distributed uniformly throughout the domain and supplied equally through all boundaries. The nature of this setup leads to the observed radial spread of all three populations.
	
		\begin{figure}[ht!]
		\centering
		\includegraphics[width=0.65\linewidth]{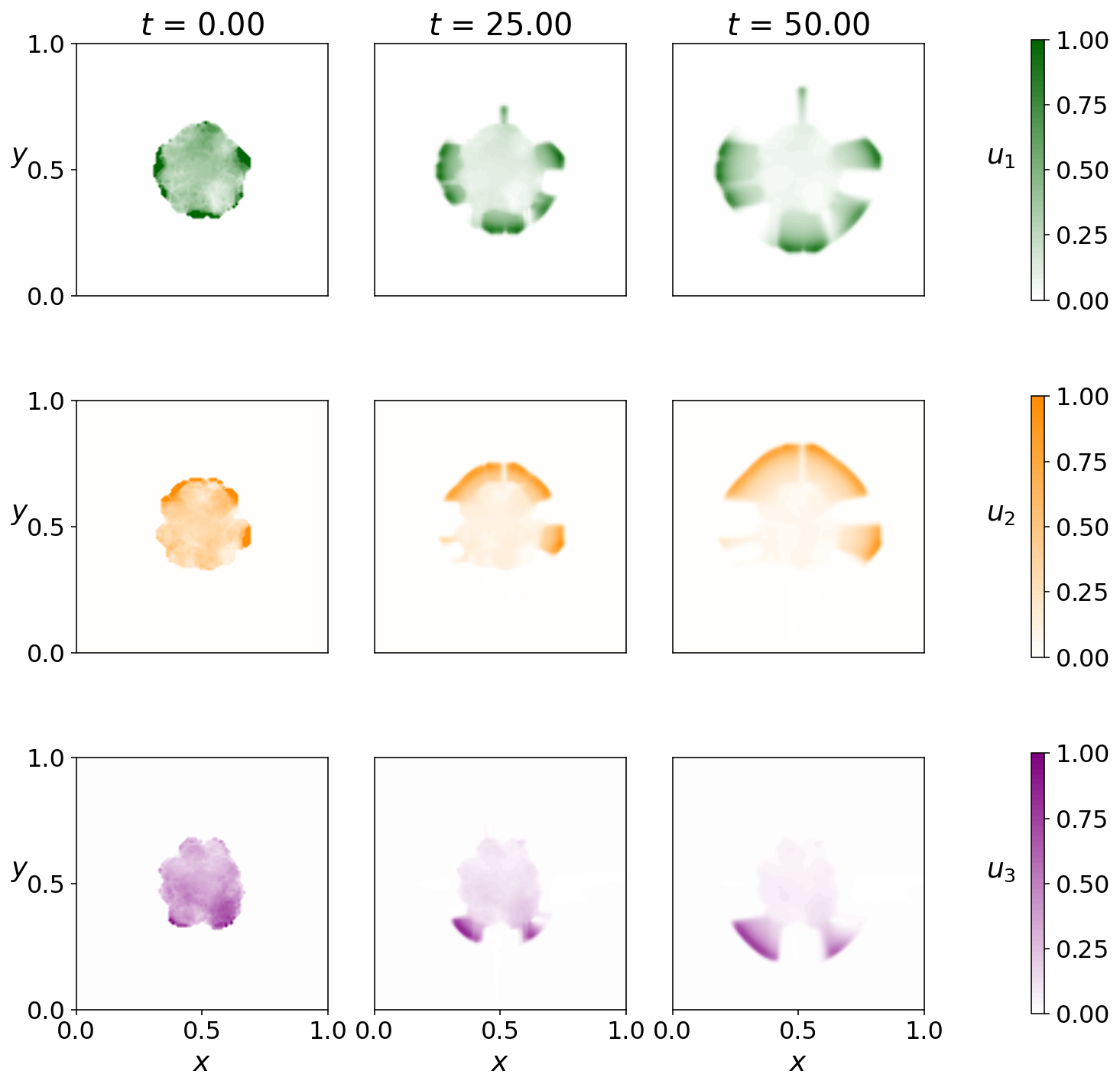}
		\caption{Case (A): Competition between three species \eqref{eq:comp3} with parameter values taken from Table~\ref{tab:parameters}. Population densities $u_1$ (first row, green), $u_2$ (second row, orange), and $u_3$ (third row, purple) at three distinct times $t \in \{0, 25, 50\}$.}
		\label{fig:comp3}
	\end{figure}

	The populations organise into sectorial patterns, with each species dominating in different angular sectors of the expanding total population. Within these sectors, higher biomass densities are observed at the advancing outer edge, while lower densities are found in the centre because of substrate depletion and subsequent bacterial decay. The main driver for this phenomenon appears to be the localised growth laws, i.e., higher density areas are reinforced by their own growth over time. 	In\cite{MASSG22}, the authors observe interspecific segregation of communities in the IBM that they developed. They call this phenomenon `columned stratification': the distribution of the species into radially growing sectors. This is precisely what we observe with the PDE model as well. To enable a direct comparison with the results reported in \cite{MASSG22}, we overlaid population distributions exceeding a prescribed density threshold in Figure \ref{fig:competition_overlap} which can be compared to Figure 2, `Competition' column and Figure 4C in \cite{MASSG22}.

	\begin{figure}[ht!]
		\centering
		\begin{subfigure}[b]{0.45\linewidth}
			\centering
			\includegraphics[width=\linewidth]{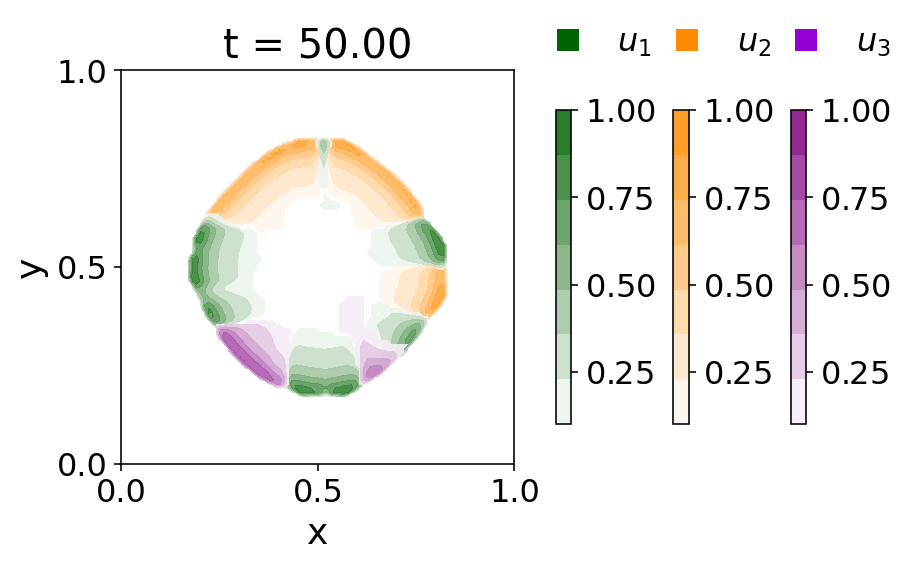}
			\caption{Competition scenario \textbf{(A)}}
			\label{fig:competition_overlap}
		\end{subfigure}
		\hfill
		\begin{subfigure}[b]{0.45\linewidth}
			\centering
			\includegraphics[width=\linewidth]{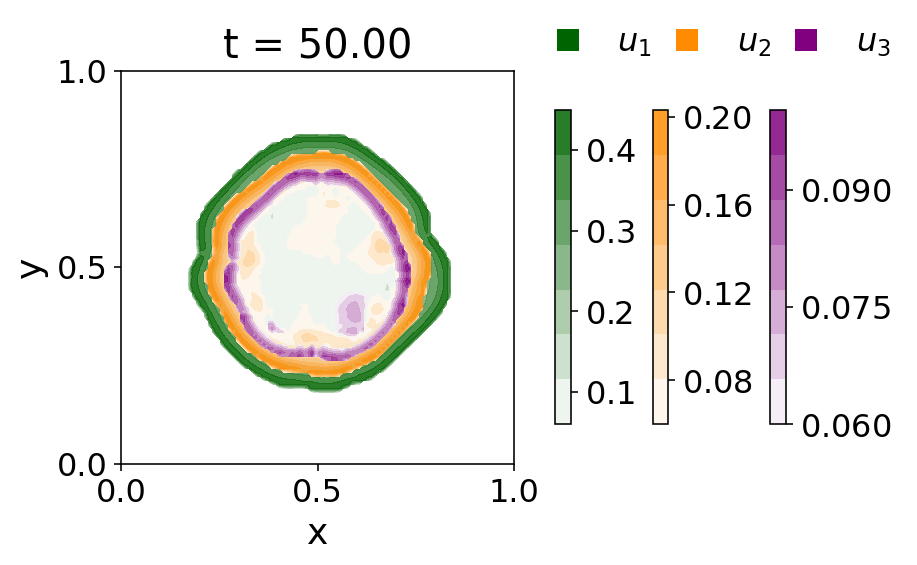}
			\caption{Commensalism scenario \textbf{(C)}}
			\label{fig:commensalism_overlap}
		\end{subfigure}
		\caption{Overlap of three population densitites at the final time $t=50$ under two scenarios: (a) Case (A): Competition between three species; (b) Case (C): commensalism of three species; Densities below $0.1$ (for (A)) and $0.07$ (for (C)) were omitted to enhance the visibility of the distinct colours. Contours are shown at several discrete levels, with darker shades indicating higher densities.}
		\label{fig:overlap_comparison}
	\end{figure}
	
	\paragraph{(B) Trade-off scenario: Growth strategist vs. yield strategist} \label{ch:results_growth_vs_yield}
	In this case, we consider the setting with two competing species with opposing growth strategies, described by the system~\eqref{eq:comp3} for $n=2$. Symmetry is imposed on the parameters, i.e., $d_1=d_2$, $a_1=a_2$, $K_1=K_2$ and $b_1=b_2=0$, while asymmetry is introduced in the growth rate and yield coefficients, with $r_1 < r_2$ and $Y_1 > Y_2$. Hence, population $u_1$ represents a yield strategist (YS), characterised by a more economical use of substrate, whereas population $u_2$ represents a growth strategist (GS), defined by a higher maximum growth rate.

		\begin{figure}[ht!]
		\centering
		\begin{minipage}{\textwidth}
			\begin{subfigure}[t]{0.49\textwidth}
				\centering
				\includegraphics[width=\linewidth]{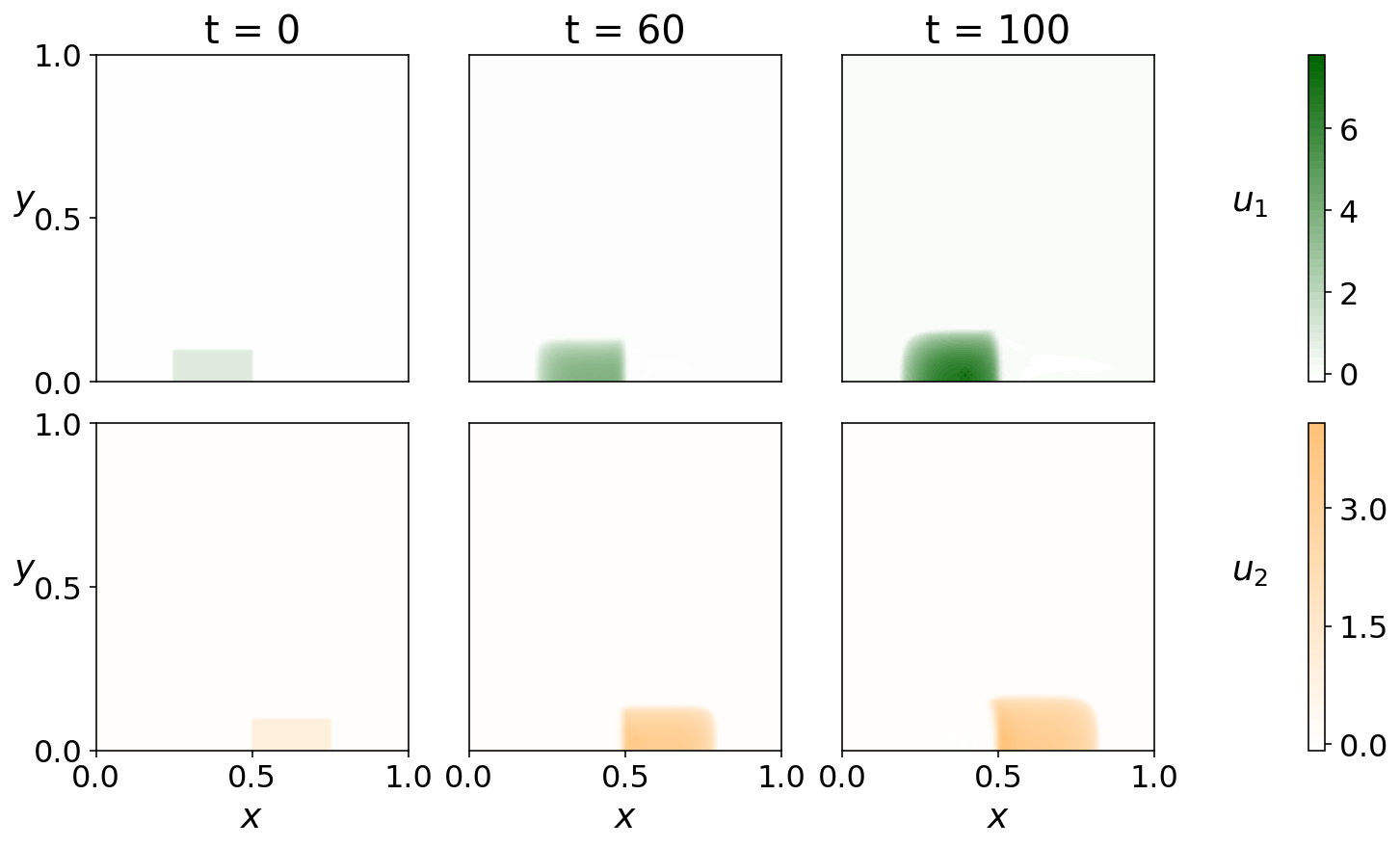} 
				\caption{Initial setup in two blocks}
				\label{fig:comp_kreft_spatial_blocks}
			\end{subfigure}
			\hfill
			\begin{subfigure}[t]{0.49\textwidth}
				\centering
				\includegraphics[width=\linewidth]{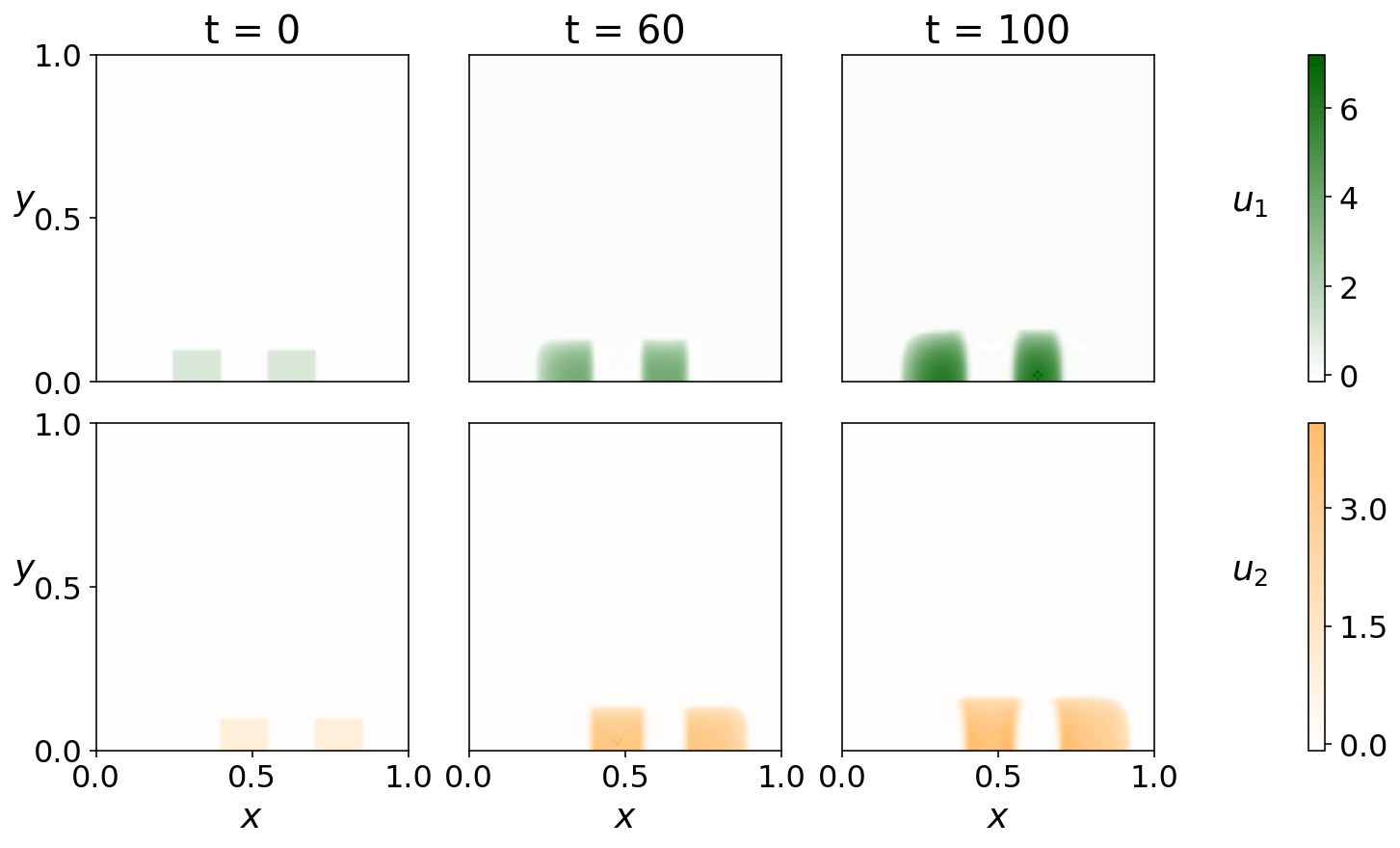} 
				\caption{Initial setup in four stripes}
				\label{fig:comp_kreft_spatial_stripes}
			\end{subfigure}
		\end{minipage}
          \par\vspace{1.5em}
		\begin{minipage}{\textwidth}
			\begin{subfigure}[t]{0.4\textwidth}
				\centering
				\includegraphics[width=0.8\linewidth]{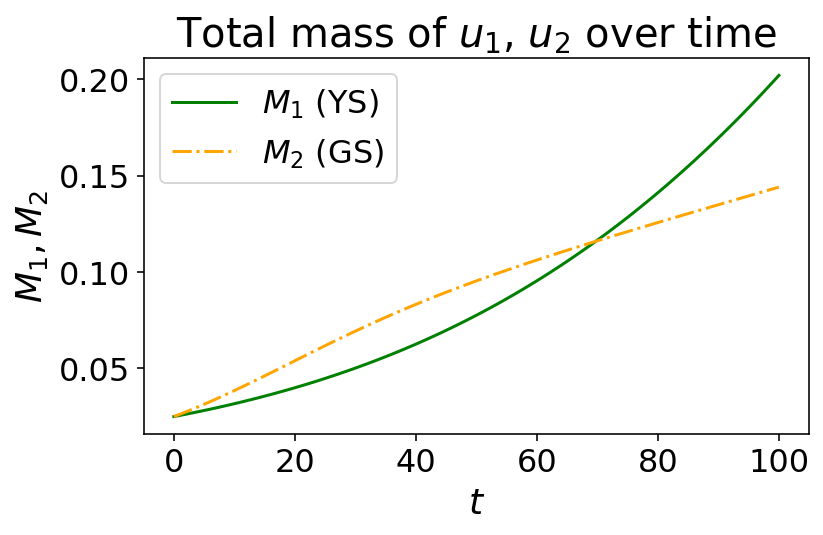} 
				\caption{Total mass for (a) }
				\label{fig:comp_kreft_lumped_blocks}
			\end{subfigure}
			\qquad \qquad 
			\begin{subfigure}[t]{0.4\textwidth}
				\centering
				\includegraphics[width=0.8\linewidth]{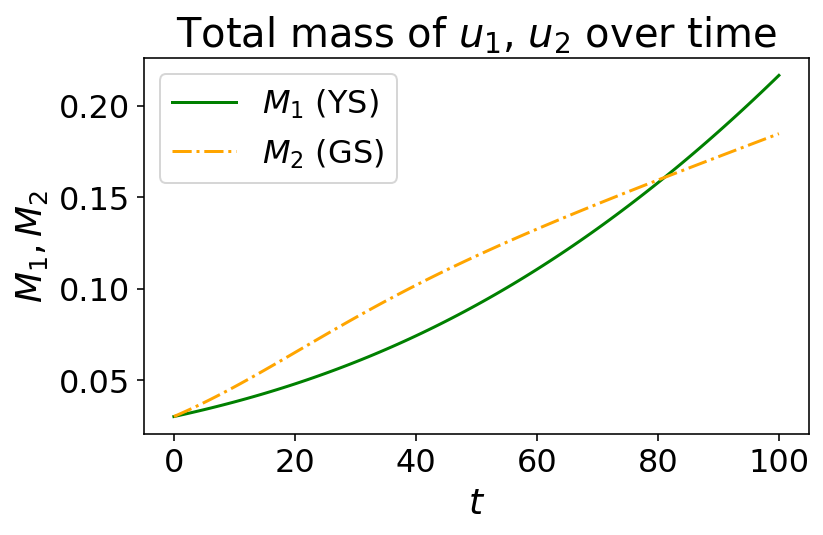}
				\caption{Total mass for (b)}
				\label{fig:comp_kreft_lumped_stripes}
			\end{subfigure}
		\end{minipage}
	 \caption{Case (B): Yield strategist (YS) vs. growth strategist (GS) with parameter values taken from Table~\ref{tab:parameters}. Figures (a), (b): Population densities $u_1$ (green), $u_2$ (orange) at times $t \in \{ 0, 60, 100\}$. Figures (c), (d): Total masses $M_1 := \int _\Omega u_1 (x,t) \dx$ (YS, green), $M_2 := \int _\Omega u_2 (x,t) \dx$ (GS, orange dashed) over time for two scenarios (a), (b) respectively.}
  \label{fig:growth_vs_yield}
		\end{figure}

	To replicate the setup of~\cite{K04}, the system is initialised with adjacent strips of $u_1$ and $u_2$ along the bottom boundary, and the substrate is provided exclusively through the top boundary. We perform one simulation with two blocks of initial populations, and one with a total of four alternating stripes of populations (cf. Figures~\ref{fig:comp_kreft_spatial_blocks},~\ref{fig:comp_kreft_spatial_stripes}). The simulations show that the yield strategist gains a competitive advantage over time (cf.~Figures~\ref{fig:comp_kreft_lumped_blocks},~\ref{fig:comp_kreft_lumped_stripes}). This advantage arises from its higher substrate efficiency: $u_1$ requires fewer nutrients for growth per unit biomass, allowing it to sustain larger populations in regions of limited substrate availability. 
	This outcome is, however, sensitive to the substrate diffusion coefficient. For sufficiently large substrate diffusion, the advantage of the yield strategist diminishes, as nutrients are rapidly redistributed across the domain and the growth strategist is able to compete more effectively. 
	Even though the GS initially seems to outgrow the YS, after some critical time the nutrient has been depleted enough around the GS and the YS starts to exceed the GS in terms of total biomass due to its frugal nutrient usage (cf. Figure~\ref{fig:growth_vs_yield}). Although more detailed initial setups are considered in \cite{K04}, qualitatively our results can be compared to those in e.g., Figure 4 in \cite{K04}.

	\begingroup
	\renewcommand{\thesubsection}{(C)}
	\subsection{Commensalism} \label{ch:results_commensalism}
	\endgroup
	
	\noindent For this case, we consider the commensal relationship of three bacterial populations via three substrates, i.e., System \eqref{eq:comm} (cf. Figure \ref{fig:schematic_commensalism}). Similar to the case (A), the initial conditions for the bacterial species are set up in a central circular region of the domain (cf. Figure~\ref{fig:commensalism_u}). From the substrates, only the first one is initially supplied, while substrates $c_2$ and $c_3$ are created exclusively as metabolic by-products of populations $u_1$ and $u_2$, respectively. The setting was posed in this way to allow qualitative comparison to the commensalism scenario in the IBM \cite{MASSG22}.
	
	\begin{figure}[ht!]
		\centering
		\includegraphics[width=0.7\linewidth]{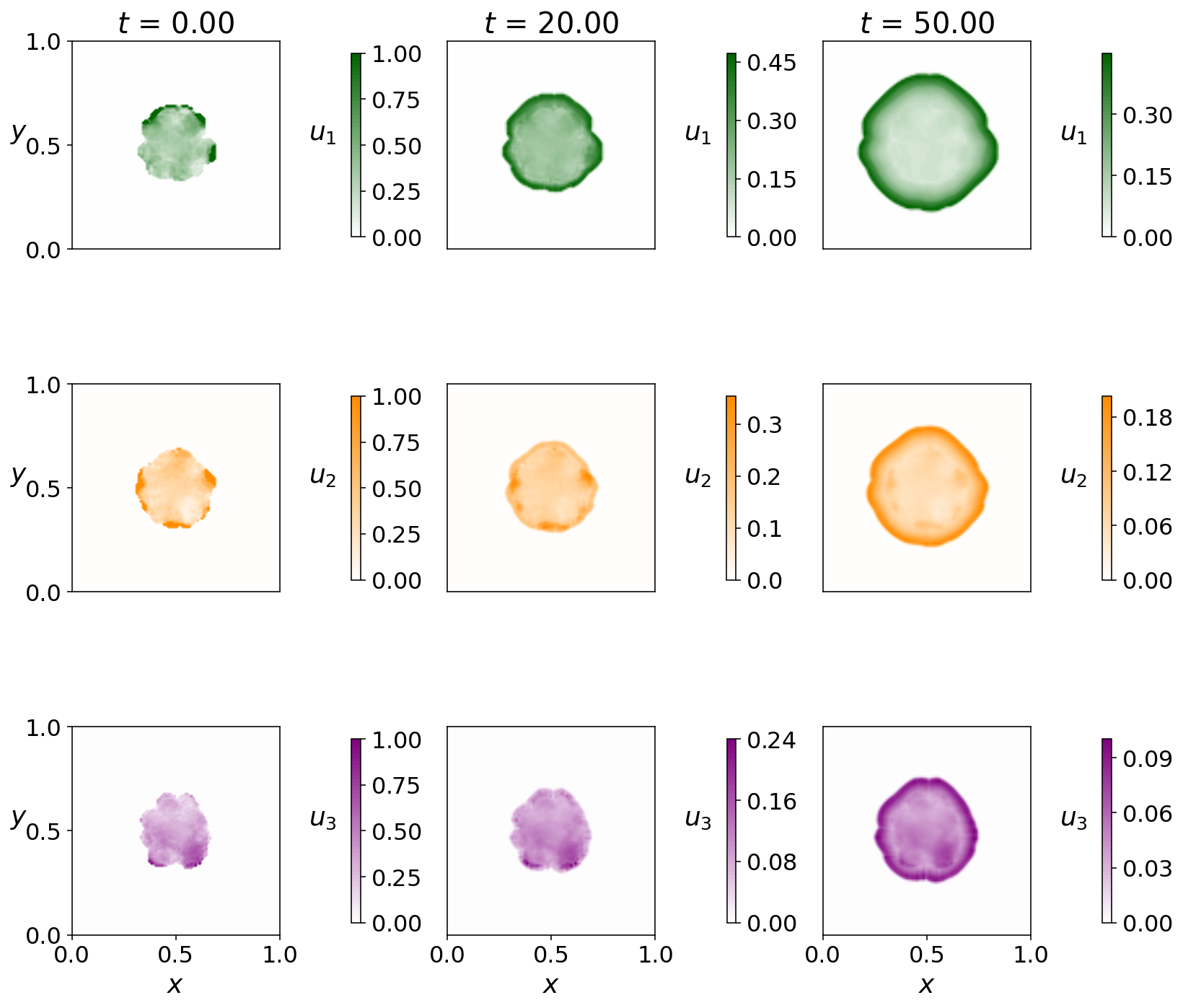}
		\caption{Case (C): Commensalism between three species \eqref{eq:comm} with parameter values taken from Table~\ref{tab:parameters}. Population densities $u_1$ (first row, green), $u_2$ (second row, orange), and $u_3$ (third row, purple) at three distinct times $t \in \{0, 25, 50\}$.}
		\label{fig:commensalism_u}
	\end{figure}
	Qualitatively, our results exhibit the same `layered stratification' effect as reported in \cite{MASSG22}. For a direct comparison we again overlay the populations above a chosen density threshold in Figure~\ref{fig:commensalism_overlap}, corresponding to the results displayed in Figure 5C in \cite{MASSG22}.
	
	The simulations reveal a radially spreading organisation of the populations into concentric rings. The outermost ring is formed by Species $1$, which consumes the supplied Substrate $1$. Immediately inside this layer, Species $2$ establishes itself, sustained by the metabolic product, Substrate $2$, of Species $1$. The innermost region is occupied by the Species $3$, which depends on the by-product, Substrate $3$ of Species $2$. This spatial ordering reflects the substrate consumption hierarchy inherent in the system: the species dependent on the externally supplied substrate dominates the periphery, successively engulfing the species whose growth relies on secondary metabolites. In all three species, the innermost populations have mostly decay due to the absence of substrates.

        \bigskip

	\subsection{Comparison with the minimum travelling wave speed} \label{sec:wave_speed_comparison}

	In this section, we perform numerical simulations of the parameter-symmetric competition system for the total biomass \eqref{eq:comp_wave}-\eqref{eq:reactions} for $x \in [0,200]$, $t \in [0,200]$ with homogeneous Neumann boundary conditions. We choose initial conditions $\rho_0 (x)= \exp(-x)$ and $ c_0 = 5$ to mimic a sharp decline of population and uniform substrate supply, respectively. All remaining simulation parameters are taken from Table~\ref{tab:parameters}, Case (A). We can indeed observe the emergence of traveling pulses into the positive $x$-direction (cf. Figure~\ref{fig:travwave}). The wave profile closely resembles the outward-growing ring of active biomass in the two-dimemsional simulations, see Figures \ref{fig:comp3} and \ref{fig:competition_overlap}. 
	Behind the 1D pulse - corresponding to the centre of the 2D population - the biomass decays due to substrate depletion. Only the outer ring of biomass remains active and continuously advances into the non-invaded substrate-rich region. For the wave speed measurement, the location where the total biomass attains its maximum $\bar{x} (t):= \text{argmax}_x \rho(x,t)$ is tracked over time. The measured value given here is computed from the final two timesteps displayed in Figure~\ref{fig:travwave},  i.e., $\bar{v} = \frac{\bar{x}(t=200) - \bar{x}(t=160)}{200-160} \approx  0.8396$. Using the system parameters, we also compute the minimal wave speed $v_{\text{min}} = 0.0018$ by the formula \eqref{eq:wave_speed}. Although these two values differ quantitatively, the measured speed is indeed, as expected, higher than the theoretical minimal wave speed. This indicates that biomass growth dominates the propagation and causes the wave to travel faster than the speed predicted through the linearisation.
	
	\begin{figure}[ht!]
		\centering
		\includegraphics[width=0.7\linewidth]{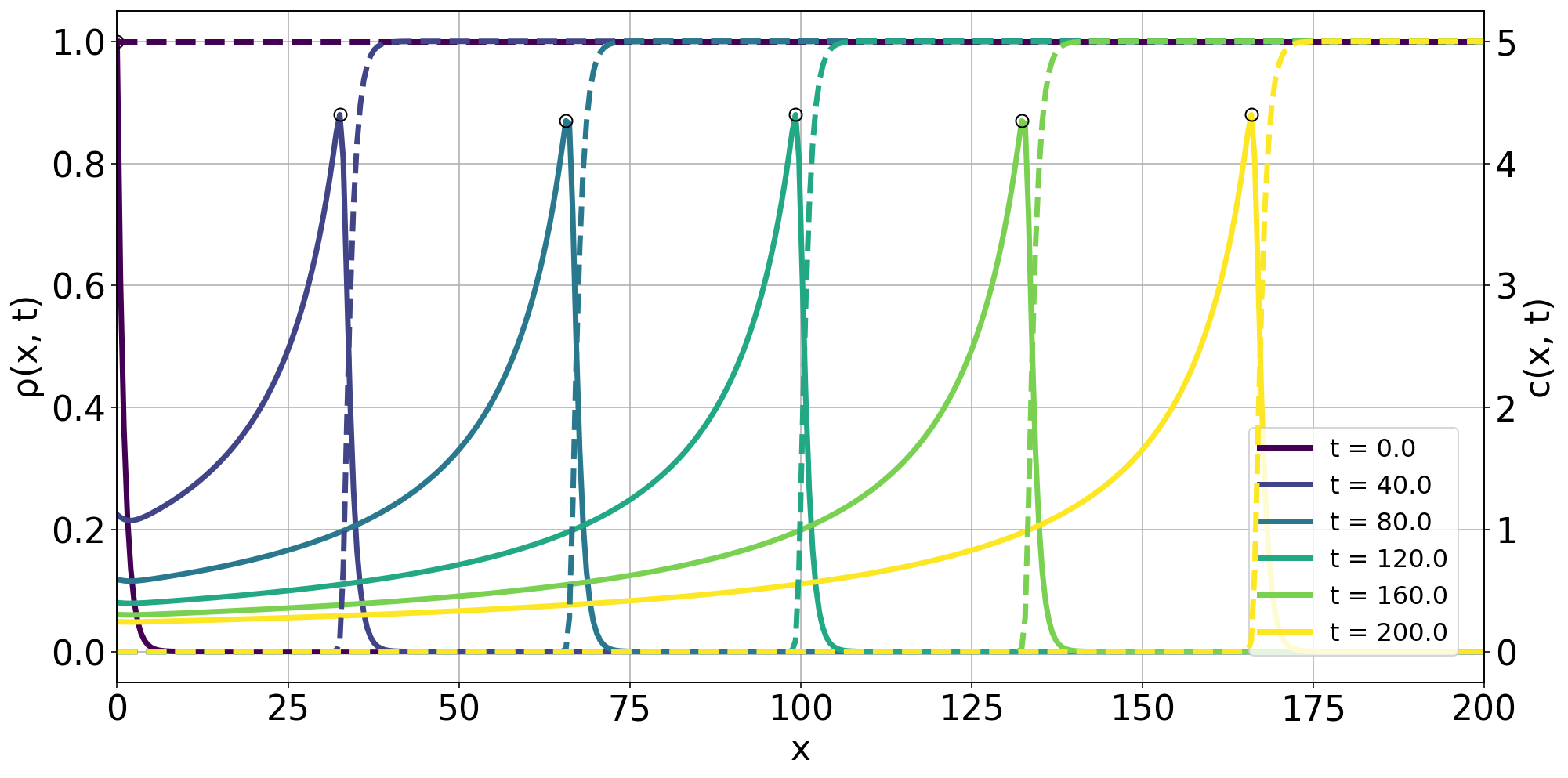}
		\caption{Travelling wave behaviour of the total population $\rho(x,t)$ (full lines) and the substrate $c(x,t)$ (dashed lines) plotted over space $x$ at different time points $t \in \{0,40,80,120,160,200\}$; The maximal value of the biomass is tracked (black circle) to measure the wave speed; Measured wave speed $\bar{v} \approx 0.8396$; Calculated wave speed $v_{\text{min}} = 0.0018$.}
		\label{fig:travwave}
	\end{figure}
	
	\section{Discussion and future perspectives} \label{sec:discussion}
	
	In this study, we derived a system of PDEs to describe the spatio-temporal dynamics of multiple bacterial species within a microbial aggregate. This model aimed to capture both their spatial organisation as well as inter-species ecological interactions through the consumption and/or production of substrates. Although the IBMs developed for the same dynamics, e.g. \cite{MASSG22, K04} have the power to offer detailed insight into the behaviour of individual bacteria, they are often computationally highly expensive. This limits their applicability for systematic exploration of large parameter spaces. Moreover, they offer very little room for mathematical analysis resulting in lack of predictive power for quantitative behaviour of the system. By formulating an associated PDE model, we were able to improve computational efficiency - from a few hours for the IBM to a few seconds for the PDE results - while still maintaining biological realism. We qualitatively validated the PDE model by comparing its results with a range of behaviours reported in earlier IBMs. We successfully reproduced several key features of microbial aggregates and biofilms that have previously been predicted in IBMs. 
	
	In competitive scenarios, we confirmed that yield strategists can outcompete growth strategists under nutrient-limited conditions. However, this advantage was highly sensitive to factors such as the initial composition of the aggregate and the diffusion coefficient of the substrate. Given that initial spatial configurations are difficult to control experimentally, particularly when it comes to microbial aggregates that form spontaneously, these findings suggest that the dominance of yield strategists in laboratory settings may be less robust than the theory alone would predict.
	
	When populations are metabolically similar, our simulations reproduced the emergence of spatial sector formation, where species segregate into distinct spatial domains. This phenomenon is largely explained by the local proliferation of bacteria, combined with passive displacement due to crowding and shoving that arises from cell proliferation. In commensal systems, our PDE model also captured the formation of concentric population rings, consistent with a metabolic hierarchy in which species align spatially according to their position within the consumption chain. All of our results closely resemble those obtained in previous individual-based studies for biofilms \cite{K04} and microbial aggreates \cite{MASSG22}, supporting the validity and relevance of our continuum model formulation. 
	
	Beyond reproducing qualitative ecological outcomes, the PDE formulation offers opportunities for mathematical analysis. For example, in two of our cases we observed radial expansion of aggregates from central inocula, resembling travelling wave phenomena. In a simplified setting, where a homogeneous biomass consumes a single nutrient, we were able to compute a minimal wave speed. Extending these results to more intricate scenarios involving multiple species and substrates with different metabolic strategies or ecological interactions, offers a promising direction for future work.

	The scenarios we investigated in this study were chosen to showcase the qualitative capabilities of our framework. For more realistic settings, however, additional modifications may be required. For instance, we impose Neumann boundary conditions for both, the bacterial species as well as the substrates. While this is a common modelling assumption, different boundary conditions such as time-dependent Dirichlet or Robin conditions may mimic laboratory settings more closely, since the bulk concentration can potentially be controlled. 
	Furthermore, the substrate diffusion occurs on a much faster timescale than substrate uptake and bacterial proliferation; and these processes themselves are again faster than cell motility. Therefore, it may be appropriate to separate different timescales and apply quasi-steady state assumptions for the substrate concentrations. Future work may involve the extension of our framework to more realistic settings, such as nitrifying communities, and conducting multi-scale or travelling wave analysis of more detailed multi-species systems. Last but not least, a comprehensive mathematical analysis of the general PDE system \eqref{eq:generalPDEmodel}-\eqref{eq:generalPDE_IC_BC} is also left for future work. 
	
	\section*{Acknowledgements}
	The authors would like to thank Eloi Martinez-Rabert for fruitful discussions and for providing references on the metabolic capabilities of nitrifying bacteria.
	
	\smallskip
	\noindent V. Freingruber is supported by the Delft Technology Fellowship of H. Yolda\c{s}.  H. Yolda\c{s} is supported by the Dutch Research Council (NWO) under the NWO-Talent Programme Veni ENW project MetaMathBio with the project number VI.Veni.222.288. Some part of this research was conducted while H. Yolda\c{s} was visiting the Okinawa Institute of Science and Technology (OIST) through the Theoretical Sciences Visiting Program (TSVP). Moreover, this research is supported by the Climate Action Research and Education Seed Fund (call November 2023) of the Climate Action Programme of TU Delft. For the purpose of open access, the authors have applied a Creative Commons Attribution (CC-BY) licence to any Author Accepted Manuscript version arising from this submission. Figures 1 and 2 were created with BioRender.com

	\section*{Code availability statement}
	The code used to generate the results in this article is publicly available at \newline 
	\href{https://github.com/vevaF/Microecological-Interactions}{https://github.com/vevaF/Microecological-Interactions}. 
	
	\appendix
	
	\section{Numerical methods} \label{app:numericalscheme}
	For the numerical solution of the two-dimensional problems the method of lines is used, i.e. the space is discretised and the resulting system of ODEs is solved using a pre-implemented explicit Runga-Kutta method of order 5(4) from the python package \textit{scipy.integrate.solve\_ivp}. 
	For the space discretization we use commonly finite volume schemes, see for example  \cite{HV03}. In particular, we use a central differences for the second order term and a first order upwinding scheme for the advection along the pressure gradient, with details given below.

	The space domain $[0,L]\times [0,L]$ was discretised in the following way: let $N+1$ denote the number of discretisation points in both the $x$- and $y$-direction (including both endpoints). Then the full domain is divided into $N^2$ squares with side lengths $\Delta x = \Delta y= \frac{L}{N}$.  We use $L=1$, $T= \{50, 100, 50\}$ (for (A), (B), (C) respectively), for the simulation window $(x,y,t)\in [0,L]\times[0,L]\times[0,T]$ and $N+1 = 100$ discretisation points in both spatial dimensions.
	
	Let $u^{i,j}_k(t) := u_k (x_i, x_j t) = u_k (i \Delta x, j \Delta y, t)$ denote the cell number density of the $k$-th bacterial species at position $(x_i,y_j)$, $1 \leq k \leq n$, and $c^{i,j}_h (t):= c_h (x_i, y_j, t) = c_h (i \Delta x, j \Delta y, t)$ the concentration of the $h$-th substrate at position $(x_i,y_j)$, $1 \leq h \leq m$, where $n$ and $m$ are the total numbers of bacterial species and substrates, respectively, and $0 \leq i,j, \leq N$. The density of the total biomass at location $(x_i,y_j)$ is defined as $\rho ^{i,j} (t) := \sum_{k=1}^{n} u_k^{i,j}(t)$.
	
	Let us denote the population flux at time $t$ from location $(x_i, y_j)$ to $(x_{i+1},y_j)$ as $f^{u_k}_{i+1/2, j}(t)$, for $0\leq i \leq N-1$, $0 \leq j \leq N$. Similarly, let $f^{u_k}_{i, j+1/2}(t)$ be the population flux at time $t$ from $(x_i, y_j)$ to $(x_{i},y_{j+1})$, $0 \leq i \leq N$, $0 \leq j \leq N-1$. We have
	\begin{align*}
		f_{i+1/2,j}^{u_k} (t) = \begin{cases}  - \frac{1}{\Delta x } d_k (u^{i+1,j}_k (t) - u^{i,j}_k (t) ) - \frac{1}{\Delta x } a_k (\rho^{i+1} (t) - \rho ^ i(t) ) u^{i,j}_k (t),
			\, &\text{if } \,\, \rho^{i+1}(t) - \rho ^i(t) >0, \\
			- \frac{1}{\Delta x } d_k (u^{i+1,j}_k(t) - u^{i,j}_k(t)) + \frac{1}{\Delta x } a_k (\rho^{i+1}(t) - \rho^i (t) ) u^{i+1,j}_k (t),
			\,  &\text{if } \,\, \rho ^{i+1}(t) - \rho^i (t) < 0,
		\end{cases}
	\end{align*}
	for $1 \leq i \leq N-1$, $ 1\leq j \leq N$, and
	\begin{align*}
		f_{i,j+1/2}^{u_k} (t) = \begin{cases}  - \frac{1}{\Delta y } d_k (u^{i,j+1}_k (t) - u^{i,j}_k (t) ) - \frac{1}{\Delta y } a_k (\rho^{i,j+1} (t) - \rho ^{i,j} (t) ) u^{i,j}_k (t),
			\, &\text{if} \, \,  \rho^{i+1,j}(t) - \rho^{i,j}(t) >0, \\
			- \frac{1}{\Delta y} d_k (u^{i,j+1}_k(t) - u^{i,j}_k(t)) + \frac{1}{\Delta y} a_k (\rho^{i,j+1}(t) - \rho^{i,j} (t) ) u^{i,j+1}_k (t),
			\, &\text{if} \, \,  \rho ^{i,j+1}(t) - \rho^{i,j} (t) < 0,
		\end{cases}
	\end{align*} for $1 \leq i \leq N$, $1\leq j \leq N-1$, where $d_k, a_k$ are the diffusion and advection coefficients of the $k$-th population, respectively. 
	
	We define $f^{u_k}_{-1/2,j} (t) = f^{u_k}_{N+1/2,j} (t) = f^{u_k}_{i,-1/2} (t) = f^{u_k}_{i, N+1/2} (t) = 0 $ for $1\leq k \leq n$, $0 \leq t \leq T$ to enforce homogeneous Neumann boundary conditions.
	
	Similarly, we define the flux of the $h$-th substrate $c_h$ from $(x_i,y_j)$ to $(x_{i+1},y_j)$ at time $t$ as
	
	\begin{equation*}
		f^{c_h}_{i+1/2,j} = -  D_h \frac{1}{\Delta x } (c_h^{i+1,j}(t)- c_h^{i,j}(t) ), 
	\end{equation*}
	for $1 \leq i \leq N-1$, $1 \leq j \leq N$, and from $(x_i,y_j)$ to $(x_{i},y_{j+1})$ at time $t$ as
	\begin{equation*}
		f^{c_h}_{i,j+1/2} = -  D_h \frac{1}{\Delta y} (c_h^{i,j+1}(t)- c_h^{i,j}(t) ), 
	\end{equation*}
	for $1 \leq i \leq N$, $1 \leq j \leq N-1$, where $D_h$ is the diffusion coefficient of the $h$-th substrate, with boundary conditions $f^{c_h}_{-1/2,j} (t) = f^{c_h}_{N+1/2,j} (t) = f^{c_h}_{i,-1/2} (t) =f^{c_h}_{i,N+1/2} (t) = c_h^\infty \geq 0$ for $1 \leq h \leq m$, $0 \leq t \leq T$. Note that these boundary conditions can also be easily adapted to have homogeneous Neumann conditions on specific boundaries and non-homogeneous conditions on others.
	
	\noindent The corresponding ODE system then reads as
	\begin{align*}
		\frac{\mathrm d}{ \mathrm d t} u_k ^{i,j}(t) &= \frac{1}{\Delta x } (f^{u_k}_{i-1/2,j}(t) - f^{u_k}_{i+1/2,j}(t)) + \frac{1}{\Delta y } (f^{u_k}_ {i,j-1/2} (t) - f^{u_k}_{i,j+1/2} (t))  + k^{u_k}_{i,j} (t, c_1^{i,j}, \cdots, c_m^{i,j}, u_1^{i,j}, \cdots u_n^{i,j}), \\
		\frac{\mathrm d }{\mathrm d t} c_h^{i,j}(t) &= \frac{1}{\Delta x } (f^{c_h} _{i-1/2,j}(t) - f^{c_h}_{i+1/2,j}(t)) + \frac{1}{\Delta y } (f^{c_h} _{i,j-1/2}(t) - f^{c_h}_{i,j+1/2}(t))+ k^{c_h}_{i,j}(t, c_1^{i,j}, \cdots, c_m^{i,j}, u_1^{i,j}, \cdots u_n^{i,j}),
	\end{align*} for $0 \leq i \leq N$, $0 \leq j \leq M$ and with appropriate initial conditions $u_k^{i,j}(0) \geq 0 $, $c_h^{i,j}(0) \geq 0$. Here, $k^{u_k}_{i,j}(t, c_1^{i,j}, \dots, c_m^{i,j}, u_1^{i,j}, \dots u_n^{i,j})$ and $k^{c_h}_{i,j}(t, c_1^{i,j}, \dots, c_m^{i,j}, u_1^{i,j}, \dots u_n^{i,j})$ are the reaction/ growth/ consumption/ production terms specific to the $k$-th species and $h$-th substrate, respectively. 

	\begin{table}[h!]
		\centering
		\begin{tabular}{lccc}
			\hline
			\textbf{Parameter} & \textbf{Case (A)} & \textbf{Case (B)} & \textbf{Case (C)} \\
			\hline
			$d_{1} = d_2 (= d_3) $ & $10^{-6}$ & $10^{-6}$ & $10^{-6}$ \\
			$a_{1} = a_2 (=a_3)$ & $10^{-5}$ & $10^{-5}$ &  $10^{-5}$\\
			$r_{1}$ & $1$ & 0.5 & 1 \\
			$r_{2}$ & $1$ & 1 &  1\\
			$r_{3}$ & $1$ & n.a. & 1 \\
			$K_{1}$ & $1$ &  100 &  1\\
			$K_{2}$ & $1$ &  100 &  1\\
			$K_{3}$ & $1$ & n.a. &  1\\
			$b_{1} = b_2 (=b_3) $ & $0.1$ & 0 & 0.1 \\
			$D$ & $10^{-4}$ & $10^{-5}$ & n.a.\\
			$D_1=D_2=D_3$ & n.a. & n.a. & $10^{-5}$\\
			$Y_{1}$ & $0.2$ & 4 & 10 \\
			$Y_{2}$ & $0.2$ & 0.5 &  10 \\
			$Y_{3}$ & $0.2$ & n.a. & 10 \\
			$\sigma_{1}$ & n.a. & n.a. &  0.5\\
			$\sigma_{2}$ & n.a. & n.a. &  0.5\\
			$L$ & $1$ & 1 & 1 \\
			$c^0$& 5 & 5 & n.a.\\
			$c^\infty$& $10^{-5}$  & 1& n.a. \\
			$c_1^0$& n.a. & n.a. & 5\\
			$c_1^\infty$& n.a. & n.a. &0 \\
			
			\hline
		\end{tabular}
		\caption{Parameter values (non-dimensional) used for the three simulation experiments inside a domain $\Omega = [0,L] \times [0,L]$: \textbf{(A)} Competition between three species;  \textbf{(B)} Competition between two species with different growth strategies;  \textbf{(C)} Commensalism between three species; (n.a. = not applicable; parameter not needed for this setting).}
		\label{tab:parameters}
	\end{table}

    \bibliography{EcologicalInteractions}

\end{document}